\DeclareMathOperator{\imag}{i}
\newcommand\bigforall{\mbox{\Large $\mathsurround=1pt\forall$}}
\def\<{\langle}
\def\>{\rangle}
\newtheorem{example}{Example}
\newcommand{\ot}{{\,\otimes\,}}
\newcommand{{\Cd}}{{\mathbb{C}^d}}
\newcommand{\Tr}{\mathrm{Tr}}
\def\oper{{\mathchoice{\rm 1\mskip-4mu l}{\rm 1\mskip-4mu l}
{\rm 1\mskip-4.5mu l}{\rm 1\mskip-5mu l}}}
\def\<{\langle}
\def\>{\rangle}
\newtheorem{Theorem}{Theorem}
\newtheorem{Corollary}{Corollary}
\newtheorem{Remark}{Remark}
\newtheorem{Proposition}{Proposition}
\newcommand{\beq}{\begin{equation}}
\newcommand{\eeq}{\end{equation}}
\newcommand{\bear}{\begin{eqnarray}}
\newcommand{\ear}{\end{eqnarray}}
\newcommand{\bdm}{\begin{displaymath}}
\newcommand{\edm}{\end{displaymath}}
\begin{document}

\title{\bf Quantum channels irreducibly covariant with respect to the finite group \\ generated by the Weyl operators}
\author{ Katarzyna Siudzi{\'n}ska and Dariusz Chru{\'s}ci{\'n}ski }
\affiliation{ Institute of Physics, Faculty of Physics, Astronomy and Informatics \\  Nicolaus Copernicus University,
Grudzi\c{a}dzka 5/7, 87--100 Toru\'n, Poland}

\begin{abstract}
We introduce a class of linear maps irreducibly covariant with respect to the finite group generated by the Weyl operators. This group provides a direct generalization of the quaternion group. In particular, we analyze the irreducibly covariant quantum channels; that is, the completely positive and trace-preserving linear maps. Interestingly, imposing additional symmetries leads to the so-called generalized Pauli channels, which were recently considered in the context of the non-Markovian quantum evolution. Finally, we provide examples of irreducibly covariant positive but not necessarily completely positive maps.
%
\end{abstract}

\maketitle

\section{Introduction}

A linear map $\Phi : M_d(\mathbb{C}) \rightarrow M_d(\mathbb{C})$ is covariant with respect to the unitary $d$-dimensional representation $U$ of the group $G$ if and only if
\begin{equation}\label{cov_def}
\bigforall_{g\in G}\bigforall_{X\in M_d(\mathbb{C})}\quad \Phi\big[U(g)X U^\dagger(g)\big]=U(g)\Phi[X]U^\dagger(g).
\end{equation}
If $\Phi$ is also completely positive and trace-preserving, then it is called the {\it covariant quantum channel}. This notion first appeared in \cite{Scutaru}, where Scutaru proved the Stinespring type theorem for covariant completely positive maps. However, the covariant quantum channels were first analyzed together with the covariant Markovian generators by Holevo \cite{Holevo1993,CQME}. Examples of covariant channels include depolarizing channels \cite{Keyl} and transpose depolarizing channels \cite{additiv,DHS}.

The channels that are covariant with respect to the irreducible unitary representation are called irreducibly covariant and posess an interesting property. Namely, if the channel is irreducibly covariant with respect to a compact group, then its Holevo capacity is additive and linearly proportional to the minimum output entropy \cite{Holevo2000}. This discovery started a series of articles dealing with additivity and multiplicity of minimum output entropy for irreducibly covariant quantum channels \cite{additiv,WernerHolevo,DHS2,Holevo2005,DFH,Fukuda,King}.

There are the channels known as Weyl-covariant which are covariant with respect to the Weyl operators \cite{CQME,Amosov,Holevo2005,DFH}.
It turns out that they satisfy condition (\ref{cov_def}) with $G$ being the group generated by the Weyl operators. The properties of unitarily covariant and Weyl-covariant channels in dimension $d\geq 2$ were studied by Datta, Fukuda, and Holevo in \cite{DFH}, whereas Mendl and Wolf \cite{20} dealt with the channels covariant with respect to O($d$). Nuwairan \cite{23} introduced the so-called EPOSIC channels and showed that they form a set of extreme points of the SU(2)-irreducibly covariant channels. Finally, Jen\u{c}ov\'{a} and Pl\'{a}vala \cite{17} provided the optimality conditions in discriminating covariant quantum channels.

In this paper, we construct the linear maps $\Phi$ that are covariant with respect to the unitary representation $U$ of a finite group $G$. In particular, we construct the Weyl maps by requiring that the group $G$ is the generalization of the quaternion group generated by the Weyl operators. It turns out that such $\Phi$ belong to the class of irreducibly covariant linear maps with simply reducible $U\otimes U^c$, which has been recently analyzed by Mozrzymas et. al. \cite{MSD}. We provide the necessary and sufficient conditions for the Weyl maps to be completely positive and trace-preserving. For the $d$-dimensional Weyl channels, there are $d-1$ choices of inequivalent representations $U$. The covariant maps possessing the additional symmetry $S \Phi[X]S^\dagger =\Phi[SXS^\dagger]$, with $S$ being a suitable permutation matrix, satisfy $\Phi=\Phi^\ast$, where $\Phi^\ast$ is the map dual to $\Phi$. Applying the additional conditions that follow from the group representation theory allows us to obtain the generalized Pauli channels for prime dimensions $d$.
Finally, we provide two methods of constructing the Weyl-covariant positive, trace-preserving maps.

\section{Generalization of the quaternion group}

Consider a set of unitary Weyl operators in $\mathbb{C}^d$
\begin{equation}
W_{kl}=\sum_{m=0}^{d-1}\omega^{km}|m+l\>\<m|,\qquad \omega=e^{2\pi\imag/d},
\end{equation}
satisfying well known properties
\begin{equation}\label{group_action}
W_{kl}W_{rs}=\omega^{ks}W_{k+r,l+s},\qquad W_{kl}^\dagger=\omega^{kl}W_{-k,-l}.
\end{equation}
Moreover
\begin{equation}\label{}
\Tr(W^\dagger_{kl} W_{mn}) = \delta_{km}\delta_{ln} ,
\end{equation}
that is, `renormalized' operators $W_{kl}/\sqrt{d}$ define an  orthonormal basis in ${M}_d(\mathbb{C})$ (with respect to the Hilbert-Schmidt inner product).
Let $G$ denote a finite group generated by $W_{kl}$. It consists in the following elements
\begin{equation}
G= \Big\{\,  \omega^mW_{kl}\, |\, k,l,m=0,1,\dots,d-1\, \Big\},
\end{equation}
and hence $|G|=d^3$. One easily  find the conjugacy classes of $G$,
\begin{equation}\label{conj}
\mathcal{C}_{0}^l=\{\omega^lW_{00}\},\quad \mathcal{C}_{kl}=\{\omega^mW_{kl}|m=0,1,\dots,d-1\},
\end{equation}
for $k,l=0,1,\dots,d-1$ and $(k,l)\neq (0,0)$.
\begin{Proposition}
There are `$d(d+1)-1$' irreducible representations of $G$

\begin{enumerate}

\item $d^2$ one-dimensional (denoted by $\phi_0=\mathrm{id},\phi_1,\dots,\phi_{d^2-1}$), and

\item $d-1$ $d$-dimensional unitary representations (denoted by $U_\alpha$, $\alpha=1,\dots,d-1$).

\end{enumerate}

\end{Proposition}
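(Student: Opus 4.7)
The plan is to match irreducible representations to conjugacy classes via the standard dimension identity $\sum_\rho (\dim \rho)^2 = |G|$, exploiting the Heisenberg-type central structure of $G$.

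First, I would reread the class list (\ref{conj}). A short calculation from (\ref{group_action}) gives $W_{rs}W_{kl} = \omega^{rl-ks}W_{kl}W_{rs}$, so conjugation sends $\omega^m W_{kl}$ to $\omega^{m+rl-ks} W_{kl}$. Consequently $Z(G) = [G,G] = \{\omega^m W_{00} : m=0,\dots,d-1\} \cong \mathbb{Z}_d$, and one recovers exactly the $d$ singleton central classes together with $d^2-1$ non-central classes of size $d$, giving the total $d(d+1)-1$. Since the number of inequivalent irreducible representations equals the number of conjugacy classes, we know in advance that there are $d(d+1)-1$ irreducibles.

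Second, the abelianization $G/[G,G] \cong \mathbb{Z}_d \times \mathbb{Z}_d$ is abelian of order $d^2$, and its $d^2$ characters pull back to the one-dimensional representations $\phi_0,\dots,\phi_{d^2-1}$ of $G$. This leaves $d-1$ irreducibles $U_\alpha$ to account for, with $\sum_\alpha (\dim U_\alpha)^2 = d^3 - d^2 = d^2(d-1)$; the only way to achieve this with $d-1$ positive integers is $\dim U_\alpha = d$ for each $\alpha$, so the statement about dimensions reduces to exhibiting $d-1$ inequivalent $d$-dimensional irreducible representations.

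Third, I would exhibit them explicitly by $U_\alpha(\omega^m W_{kl}) = \omega^{\alpha m} W_{\alpha k, l}$ for $\alpha = 1,\dots,d-1$. Using (\ref{group_action}) one verifies directly that $U_\alpha$ is a homomorphism; the $U_\alpha$ are pairwise inequivalent because they act on the central element $\omega W_{00} = \omega I$ by distinct scalars $\omega^\alpha$; and irreducibility follows from Schur's lemma, because for $\gcd(\alpha,d)=1$ the operators $\{W_{\alpha k,l}\}_{k,l}$ are a reindexing of the Hilbert--Schmidt orthonormal basis $\{W_{k,l}\}_{k,l}$ of $M_d(\mathbb{C})$, so the only matrix commuting with the whole image is a scalar. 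The dimension identity $d^2\cdot 1^2 + (d-1)\cdot d^2 = d^3 = |G|$ then closes the argument.

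I expect the main obstacle to lie in the case of composite $d$ with $\gcd(\alpha,d)>1$: the elementary Schur argument fails because the image of $U_\alpha$ no longer spans $M_d(\mathbb{C})$, and in fact the conjugacy class list (\ref{conj}) also needs refinement in that regime. The natural workaround is the Stone--von Neumann-type theorem for the finite Heisenberg group: each primitive character $\chi_\alpha$ of $Z(G)$ admits, up to equivalence, exactly one irreducible representation of $G$ with that central character, of dimension $\sqrt{|G|/|Z(G)|}=d$. Under the (implicit) assumption that $d$ is prime the construction above already delivers all of these directly.
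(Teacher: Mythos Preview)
Your argument is correct (for prime $d$, which is the regime in which the paper's conjugacy-class list (\ref{conj}) is valid) and shares the same backbone as the paper's proof: count the conjugacy classes to get the total number of irreducibles, pin down the one-dimensional ones, and close with the dimension identity $\sum_\rho(\dim\rho)^2=|G|$. The execution differs in two places. First, you obtain the $d^2$ one-dimensional representations cleanly from the abelianization $G/[G,G]\cong\mathbb{Z}_d\times\mathbb{Z}_d$, whereas the paper argues more informally via character values and orthogonality. Second, and more substantively, the paper is non-constructive at this stage: it notes the single defining $d$-dimensional representation and then solves the pair of equations $\sum n_k=d^2+d-1$, $\sum n_k d_k^2=d^3$ to force $n_2=d-1$. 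You instead exhibit all $d-1$ representations explicitly via $U_\alpha(\omega^m W_{kl})=\omega^{\alpha m}W_{\alpha k,l}$, distinguish them by their central characters $\omega\mapsto\omega^\alpha$, and check irreducibility by Schur. This construction is in fact simpler than the one the paper gives later in Proposition~\ref{prop} (where the central character is $\omega\mapsto\omega^{\alpha^2}$ and one must pair $U_\alpha$ with $\overline{U}_\alpha$ to exhaust the nontrivial characters). One small remark: your sentence ``the only way to achieve this with $d-1$ positive integers is $\dim U_\alpha=d$'' is not self-evident without the divisibility constraint $\dim\mid|G|$; but since you then exhibit the $d-1$ representations directly, the dimension identity alone already forces completeness, and that clause can simply be dropped. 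Your closing observation that both the class list and the explicit construction require care for composite $d$ is well taken and goes beyond what the paper says.
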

\begin{proof}
The dimensions of all the non-equivalent irreducible representations are related to the number of elements $|G|=d^3$ by
\begin{equation}\label{AAA}
|G|=\sum_{k=1}^Nn_kd_k^2,
\end{equation}
where $n_k$ is the number of $d_k$-dimensional representations. The number of nonequivalent irreducible representations is equal to the number of equivalence classes (\ref{conj}) of the group, of which there are
\begin{equation}\label{BBB}
d+d^2-1=\sum_{k=1}^Nn_k.
\end{equation}
Moreover, we know that every group has the identity representation, which is one-dimensional, so that $d_1=1$. The following property of the characters,
\begin{equation}
|G|=\sum_{g\in G}\left|\chi^\alpha(g)\right|^2,
\end{equation}
and the fact that one-dimensional representations satisfy the group action imply that $\chi^\alpha(g)\in\{1,\omega,\dots,\omega^{d-1}\}$. Moreover, there are at most $n_1=d^2$ orthogonal rows of the character table.
The group $G$ is generated by the $d$-dimensional Weyl operators, and therefore $W_{kl}$ define an irreducible unitary representation of $G$. Denote its dimension by $d_2=d$. From (\ref{AAA}) and (\ref{BBB}), it follows that
\begin{equation}
d-1-n_2=\sum_{k=3}^Nn_k,\qquad\sum_{k=3}^Nn_k\left(1-\frac{d_k^2}{d^2}\right)=0.
\end{equation}
As $d_k\neq d$ for $k\geq 3$, the only solution is $n_k=0$, $k\geq 3$, and $n_2=d-1$.
\end{proof}
The character table has the structure presented below.

\FloatBarrier
\begin{figure}[ht!]
  \centering
\tiny
       \includegraphics[width=0.4\textwidth]{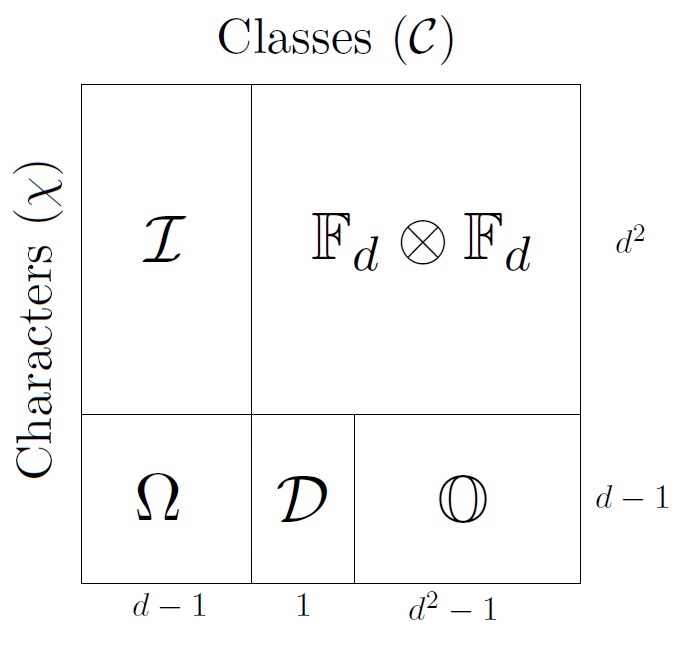}
\end{figure}
\FloatBarrier

In the above table, we introduced the matrices with the following entries:
$\mathcal{I}_{ij}=1$, the Hadamard matrix $(\mathbb{F}_d)_{kl}=\omega^{kl}$ ($k,l=0,\dots,d-1$), $\mathcal{D}_m=d$, $\mathbb{O}_{mn}=0$, and $\Omega_{ij}=d\omega^{ij}$ ($i,j=1,\dots,d-1$).
The characters of the irreducible representations
\begin{equation}
\chi^{\mathrm{id}},\chi^{\phi_1},\dots,\chi^{\phi_{d^2-1}},\chi^{U_1},\dots,\chi^{U_{d-1}}
\end{equation}
number the rows of the table, and the conjugacy classes
\begin{equation}\label{klasy}
\mathcal{C}_{0}^1,\dots,\mathcal{C}_{0}^{d-1},
\mathcal{C}_{0}^0\equiv\mathcal{C}_{00},\dots,\mathcal{C}_{0,d-1},
\dots,
\mathcal{C}_{d-1,0},\dots,\mathcal{C}_{d-1,d-1}
\end{equation}
number the columns.
From definition, the columns and rows of the matrix $\mathbb{F}_d\otimes\overline{\mathbb{F}}_d$ are mutually orthogonal. We check that the one-dimensional representations satisfy the group action (\ref{group_action}),
\begin{equation}
\chi^{md+n}(\mathcal{C}_{kl})\chi^{md+n}(\mathcal{C}_{rs})
=\chi^{md+n}(\mathcal{C}_{k+r,l+s}),
\end{equation}
$m,n=0,\dots,d-1$, as
\begin{equation}
\chi^{md+n}(\mathcal{C}_{kl})=\left(\mathbb{F}_d\otimes\overline{\mathbb{F}}_d \right)_{mn,kl}=\omega^{mk-nl}.
\end{equation}

Let us denote the known unitary representation (with the Weyl operators) by $U_1$. For $d=2$, this is the only irreducible representation of dimension higher than one.
Observe that for prime $d>2$, the other unitary representations follow from the group action written in terms of the group generators,
\begin{equation}\label{action}
W_{kl}W_{rs}=(\omega W_{00})^{ks}W_{k+r,l+s}.
\end{equation}

\begin{Proposition}\label{prop}
For prime $d$, the set of $d$-dimensional irreducible unitary representations of $G$ consist in $U_1$
and the representations constructed from $U_1$ in the following way,
\begin{align}
U_1\to U_\alpha:\quad &\omega^mW_{kl}\to \omega^mW_{\alpha k,\alpha l},\nonumber\\
&\omega^m W_{00}\to \omega^{m\alpha^2}W_{00},\label{U}\\
U_1\to \overline{U}_\alpha:\quad &\omega^mW_{kl}\to \omega^m\overline{W}_{\alpha k,\alpha l}=\omega^mW_{-\alpha k,\alpha l},\nonumber\\
&\omega^m W_{00}\to \omega^{-m\alpha^2}W_{00},\label{Ubar}
\end{align}
for $\alpha=1,\dots,\frac{d-1}{2}$. To number the rows of the character table, they are ordered as follows,
\begin{equation}
U_1,\dots,U_{\frac{d-1}{2}},\overline{U}_{\frac{d-1}{2}},\dots,\overline{U}_1.
\end{equation}
\end{Proposition}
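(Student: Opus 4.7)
The plan is to establish three things, after which the proposition will follow from Proposition~1: (i) the formulas (\ref{U}) and (\ref{Ubar}) actually define group homomorphisms, (ii) each of the resulting $d$-dimensional representations is irreducible, and (iii) the listed $d-1$ representations are pairwise inequivalent. Since Proposition~1 has already counted exactly $d-1$ irreducible $d$-dimensional representations via $|G|=\sum_k n_k d_k^2$, matching this count closes the argument.

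Step one: verify the homomorphism property by substituting (\ref{action}) into the prescriptions. For $U_\alpha$, one directly computes
\begin{equation*}
U_\alpha(W_{kl})\,U_\alpha(W_{rs}) = W_{\alpha k,\alpha l}\,W_{\alpha r,\alpha s} = \omega^{\alpha^2 ks}\,W_{\alpha(k+r),\alpha(l+s)},
\end{equation*}
and this equals $U_\alpha\bigl((\omega W_{00})^{ks}W_{k+r,l+s}\bigr)$ precisely because of the rescaling $\omega W_{00}\mapsto\omega^{\alpha^2}W_{00}$; the exponent $\alpha^2$ on the central generator is in fact forced by consistency with (\ref{action}). The analogous computation for $\overline{U}_\alpha$ uses $\overline{W}_{kl}=W_{-k,l}$ and produces the factor $\omega^{-\alpha^2 ks}$, which explains the sign flip on the central generator. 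Unitarity is automatic since each image is (a scalar multiple of) a Weyl operator.

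Step two: irreducibility. Because $d$ is prime, every nonzero $\alpha\in\mathbb{Z}_d$ is invertible, so $(k,l)\mapsto(\alpha k,\alpha l)$ and $(k,l)\mapsto(-\alpha k,\alpha l)$ are bijections of $\mathbb{Z}_d\times\mathbb{Z}_d$. Hence the images $\{U_\alpha(W_{kl})\}$ and $\{\overline{U}_\alpha(W_{kl})\}$ are (up to relabeling) the full Weyl system, which is an orthonormal basis of $M_d(\mathbb{C})$ with respect to the Hilbert--Schmidt product. The linear span of the representation therefore equals $M_d(\mathbb{C})$, and Burnside's theorem yields irreducibility.

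Step three: inequivalence, via characters. Using $\Tr W_{kl}=d\,\delta_{k0}\delta_{l0}$, every character vanishes on the non-central classes $\mathcal{C}_{kl}$ with $(k,l)\neq(0,0)$, so all information is concentrated on $\mathcal{C}_0^m$:
\begin{equation*}
\chi^{U_\alpha}(\omega^m W_{00}) = d\omega^{m\alpha^2}, \qquad \chi^{\overline{U}_\alpha}(\omega^m W_{00}) = d\omega^{-m\alpha^2}.
\end{equation*}
The main obstacle — and the place where primality of $d$ is really used — is the elementary number-theoretic check that as $\alpha$ runs over $\{1,\ldots,(d-1)/2\}$ the residues $\alpha^2$ exhaust the quadratic residues modulo $d$ exactly once (since $\alpha^2\equiv\beta^2\pmod d$ iff $\alpha\equiv\pm\beta$), and combined with their negatives $-\alpha^2$ yield the required $d-1$ distinct values, so that the $d-1$ characters are pairwise different. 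Once this is in hand, the count from Proposition~1 forces the list $U_1,\ldots,U_{(d-1)/2},\overline{U}_{(d-1)/2},\ldots,\overline{U}_1$ to be complete.
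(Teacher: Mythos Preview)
Your Steps 1 and 2 are correct and in fact more thorough than the paper's own argument, which essentially only verifies that the substitutions preserve the defining relation (\ref{action}) and then asserts, without further checking, that the resulting representations fill the remaining rows of the character table.

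The genuine gap is in Step 3. Your number-theoretic claim --- that $\{\alpha^2:1\le\alpha\le(d-1)/2\}$ together with $\{-\alpha^2\}$ gives $d-1$ distinct nonzero residues modulo $d$ --- holds only when $-1$ is a quadratic \emph{non}-residue, i.e.\ for primes $d\equiv 3\pmod 4$. When $d\equiv 1\pmod 4$ (already $d=5$), $-1$ is itself a square, so $\{-\alpha^2\}$ coincides with $\{\alpha^2\}$ and you obtain only $(d-1)/2$ distinct central characters rather than $d-1$. Concretely, for $d=5$ one has $1^2\equiv 1$, $2^2\equiv 4$, $-1^2\equiv 4$, $-2^2\equiv 1$, hence $\chi^{U_1}=\chi^{\overline U_2}$ and $\chi^{U_2}=\chi^{\overline U_1}$ on the centre (and both characters vanish elsewhere), so your pairwise-inequivalence argument collapses. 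The representations whose central character lies over a quadratic non-residue are simply not reached by the prescription $W_{kl}\mapsto W_{\pm\alpha k,\alpha l}$ for such $d$. The paper's proof never confronts this point --- it checks the homomorphism property and then defers to the character table rather than verifying inequivalence directly --- so the obstacle you have identified is not one the paper resolves either; it is a real issue with the statement for $d\equiv 1\pmod 4$, not merely with your argument.
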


\begin{proof}
The fact that $U_1$ is an irreducible representation of $G$ is obvious, as $U_1(g)$ have the right dimensions, characters, and they satisfy the group action.
After making substitutions (\ref{U}) and (\ref{Ubar}) in eq. (\ref{action}), we get
\begin{equation}
W_{\alpha k,\alpha l}W_{\alpha r,\alpha s}=(\omega^{\alpha^2} W_{00})^{ks}W_{\alpha(k+r),\alpha(l+s)},
\end{equation}
\begin{equation}
\overline{W}_{\alpha k,\alpha l}\overline{W}_{\alpha r,\alpha s}=(\omega^{-\alpha^2} W_{00})^{ks}\overline{W}_{\alpha(k+r),\alpha(l+s)},
\end{equation}
respectively, which agree with (\ref{action}). Observe that $\overline{U}_\alpha$ is the contragredient (dual) representation of $U_\alpha$.

It is worth noting that for $\alpha=\frac{d+1}{2},\dots,d-1$, (\ref{U}) and (\ref{Ubar}) also produce the irreducible unitary representations, but they are equivalent to the ones already obtained (they correspond to the same rows of the character table).
\end{proof}

\section{Covariant quantum channels}

{Consider an arbitrary finite group} $G$ and its unitary  representation $U$ in $\mathbb{C}^d$. Let us construct the linear map
\begin{equation}\label{kanal}
\Phi=\sum_{g\in G}\mu(g)\mathrm{Ad}_{U(g)},
\end{equation}
where the adjoint representation is defined by
\begin{equation}
\mathrm{Ad}_{U(g)}(X)=U(g)XU^\dagger(g).
\end{equation}
This class of maps defines a $\star$-algebra: if
\begin{equation}
\Psi=\sum_{g\in G}\nu(g)\mathrm{Ad}_{U(g)},
\end{equation}
then
\begin{equation}\label{}
  \Phi \Psi = \sum_{g,h\in G}\mu(g)\nu(h)\mathrm{Ad}_{U(gh)} = \sum_{g\in G}(\mu\ast \nu)(g) \mathrm{Ad}_{U(g)},
\end{equation}
where
\begin{equation}\label{}
  (\mu\ast \nu)(g) = \sum_{h \in G} \mu(h) \nu(h^{-1}g) ,
\end{equation}
is the convolution on $G$. Moreover, the dual {map} $\Phi^\ast$ defined by
\begin{equation}\label{}
  {\rm Tr}(\Phi[X]^\dagger Y) =: {\rm Tr}(X^\dagger \Phi^*[Y]) ,
\end{equation}
reads
\begin{equation}
\Phi^* =\sum_{g\in G}\overline{\mu(g)}\mathrm{Ad}_{U(g)}.
\end{equation}

Now, formula (\ref{cov_def}) implies that $\Phi$ is covariant with respect to $U$ if and only if
\begin{equation}
\sum_{h\in G}\mu(h)\mathrm{Ad}_{U(g)U(h)}
=\sum_{h\in G}\mu(h)\mathrm{Ad}_{U(h)U(g)} .
\end{equation}
Using $U(g)U(h)=U(gh)$, one has
\begin{equation}\label{1}
\sum_{h\in G}\mu(h)\mathrm{Ad}_{U(gh)}=\sum_{h\in G}\mu(h)\mathrm{Ad}_{U(hg)}.
\end{equation}
Note that
\begin{equation}
\sum_{h\in G}\mu(h)\mathrm{Ad}_{U(hg)}=\sum_{ghg^{-1}\in G}\mu(ghg^{-1})\mathrm{Ad}_{U(gh)},
\end{equation}
which, together with (\ref{1}), results in the following condition for $\mu(g)$,
\begin{equation}
\bigforall_{g,h\in G}\quad \mu(g)=\mu(hgh^{-1}).
\end{equation}
Therefore, $\mu(g)$ depends on the classes $\mathcal{C}(g)$ of the group elements $g$ instead of the elements themselves.
{Denoting the conjugacy classes by $C_k$ $(k=1,2,\ldots,r)$,} one has
\begin{equation}\label{}
\Phi=\sum_{k=1}^r \mu_k \Phi_k ,
\end{equation}
where $\mu_k=\mu(g)$ for $g \in C_k$, 
and
\begin{equation}\label{}
\Phi_k=\sum_{g\in C_k}\mathrm{Ad}_{U(g)}.
\end{equation}
For the linear map $\Phi$ to describe the quantum channel, additional constraints have to be imposed upon $\mu(g)$.

\begin{Theorem}\label{tw}
The map $\Phi$ in (\ref{kanal}) is the quantum channel if its coefficients $\mu(g)$ satisfy
\begin{equation}\label{warunki}
\mu(g)\geq 0,\qquad \sum_{g\in G}\mu(g)=1.
\end{equation}
\end{Theorem}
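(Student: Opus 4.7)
My plan is to recognize that under the stated assumptions, $\Phi$ is a \emph{random unitary} (or mixed-unitary) channel, i.e.\ a convex combination of unitary conjugations, and to deduce complete positivity and trace preservation term by term.

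First I would treat complete positivity. For each fixed $g \in G$, the map $\mathrm{Ad}_{U(g)} : X \mapsto U(g)X U^\dagger(g)$ is a unitary conjugation, hence is completely positive; indeed, it has the trivial Kraus representation with a single Kraus operator $U(g)$, and $U(g)U(g)^\dagger = \oper$. Since $\mu(g) \geq 0$ for every $g \in G$, the map $\Phi = \sum_{g \in G} \mu(g)\,\mathrm{Ad}_{U(g)}$ is a nonnegative linear combination of completely positive maps, and the cone of completely positive maps is closed under such combinations. Equivalently, one may exhibit the Kraus decomposition $\Phi[X] = \sum_{g \in G} K_g X K_g^\dagger$ with $K_g := \sqrt{\mu(g)}\,U(g)$.

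Next I would verify trace preservation. Since each $U(g)$ is unitary, the cyclicity of the trace gives $\mathrm{Tr}\bigl(U(g) X U^\dagger(g)\bigr) = \mathrm{Tr}(X)$ for every $X \in M_d(\mathbb{C})$. Hence
\begin{equation}
\mathrm{Tr}\bigl(\Phi[X]\bigr) = \sum_{g \in G} \mu(g)\,\mathrm{Tr}(X) = \Bigl(\sum_{g \in G}\mu(g)\Bigr)\mathrm{Tr}(X) = \mathrm{Tr}(X),
\end{equation}
using the normalization $\sum_{g \in G} \mu(g) = 1$. In Kraus language the same identity reads $\sum_{g \in G} K_g^\dagger K_g = \sum_{g \in G} \mu(g)\,U^\dagger(g)U(g) = \oper$, which is the standard trace-preservation condition.

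There is essentially no obstacle here: both properties decouple across the sum and reduce to the elementary fact that unitary conjugation is a CPTP map. I would only remark that the hypothesis is sufficient but not necessary, because $\mu(g)$ need not be real-positive on the nose; the theorem as stated just gives a convenient sufficient condition that the class functions $\mu_k$ of the previous paragraph may be chosen from. No additional invariance constraints on $\mu$ are needed for the CPTP property itself, since covariance was already encoded by requiring $\mu$ to be constant on conjugacy classes.
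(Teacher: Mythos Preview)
Your proof is correct: the map is a random unitary channel under the stated hypotheses, and both complete positivity and trace preservation follow immediately from the fact that each $\mathrm{Ad}_{U(g)}$ is a unitary conjugation (hence CPTP) and that convex combinations preserve these properties. The paper in fact states Theorem~\ref{tw} without proof, treating it as evident and only remarking afterwards that the conditions are sufficient but not necessary; your write-up simply supplies the standard details that the paper leaves implicit.
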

Interestingly, as we shall see, conditions (\ref{warunki}) are only sufficient and not necessary.

{Now, consider} the group $G$ generated by the $d$-dimensional Weyl operators. Construct the linear map $\Phi$ that is covariant with respect to the representation $U_1$. Let us assume that $\mu(g)$ is the linear combination of the characters,
\begin{equation}\label{mu}
\mu(\mathcal{C}(g))=\frac{1}{|G|}\left(\sum_{\alpha=0}^{d^2-1}\nu_\alpha\chi^\alpha(\mathcal{C}(g))
+\sum_{\alpha=1}^{d-1}\tau_\alpha\chi^{U_\alpha}(\mathcal{C}(g))\right),
\end{equation}
where $\chi^\alpha$ is the character of the one-dimensional representation $\phi_\alpha$, and $\chi^{U_\alpha}$ denotes the character of the $d$-dimensional representation $U_\alpha$. Using the character table of $G$, one arrives at
\begin{equation}\label{mu1}
\mu(\mathcal{C}_{kl})=\frac{1}{|G|}\sum_{m,n=0}^{d-1}\nu_{mn}\omega^{mk-nl},
\qquad (k,l)\neq (0,0),
\end{equation}
\begin{equation}\label{mu2}
\mu(\mathcal{C}_0^l)=\frac{1}{|G|}\sum_{m,n=0}^{d-1}\nu_{mn}+
\frac{d}{|G|}\sum_{\alpha=1}^{d-1}\tau_\alpha \omega^{\alpha l},
\qquad l=0,\dots,d-1.
\end{equation}
Therefore, the resulting map is the Weyl map
\begin{equation}\label{inv}
\Phi[X]=\sum_{k,l=0}^{d-1}\mu_{kl}W_{kl}XW_{kl}^\dagger,
\end{equation}
where $\mu_{kl}=d\mu(\mathcal{C}_{kl})$ and
\begin{equation}
\mu_{00}=\sum_{l=0}^{d-1}\mu(\mathcal{C}_0^l)=\frac{d}{|G|}\sum_{m,n=0}^{d-1}\nu_{mn}.
\end{equation}
Note that parameters $\tau_\alpha$ are completely redundant, {as} they do not enter the final formula for $\mu_{kl}$.

\begin{example}
For $d=2$, $G=Q_8$ is the quaternion group, for which $U(g)$ are given by the Pauli matrices $\{\pm\sigma_0,\pm\sigma_1,\pm\imag\sigma_2,\pm\sigma_3\}$. Now, we construct the map
\begin{equation}
\Phi[X]=\sum_{k=1}^2\mu(\mathcal{C}_0^k)X+\sum_{k=1}^3\mu(\mathcal{C}_k)\sigma_kX\sigma_k,
\end{equation}
where $\mathcal{C}_1=\{\pm\sigma_1\}$, $\mathcal{C}_2=\{\pm\imag\sigma_2\}$, $\mathcal{C}_3=\{\pm\sigma_3\}$, and $\mathcal{C}_0^k=\{(-1)^k\sigma_0\}$. The coefficients $\mu(\mathcal{C}_0^k)$ and $\mu(\mathcal{C}_k)$ read
\begin{equation}
\begin{split}
\mu(\mathcal{C}_1)&=\frac 18 \left(\nu_{00}-\nu_{01}+\nu_{10}-\nu_{11}\right),\\
\mu(\mathcal{C}_2)&=\frac 18 \left(\nu_{00}-\nu_{01}-\nu_{10}+\nu_{11}\right),\\
\mu(\mathcal{C}_3)&=\frac 18 \left(\nu_{00}+\nu_{01}-\nu_{10}-\nu_{11}\right),\\
\mu(\mathcal{C}_0^0)&=\frac 18 \left(\nu_{00}+\nu_{01}+\nu_{10}+\nu_{11}+2\tau_1\right),\\
\mu(\mathcal{C}_0^1)&=\frac 18 \left(\nu_{00}+\nu_{01}+\nu_{10}+\nu_{11}-2\tau_1\right) ,
\end{split}
\end{equation}
and hence $\Phi$ can be rewritten as
\begin{equation}\label{Pauli}
\Phi[X]=\sum_{k=0}^3\mu_k\sigma_kX\sigma_k
\end{equation}
with $\mu(\mathcal{C}_k)=2\mu_k$, $k=1,\ 2,\ 3$, and
\begin{equation}
\mu_0=\frac 14 \left(\nu_{00}+\nu_{01}+\nu_{10}+\nu_{11}\right).
\end{equation}
Note that $\mu(\mathcal{C}_0^k)$ can be negative, provided that $\mu(\mathcal{C}_0^0) + \mu(\mathcal{C}_0^1) \geq 0$.
\end{example}

\begin{Theorem}\label{quantum_channel}
The Weyl map $\Phi$ given in (\ref{inv}) is a quantum channel if and only if
\begin{equation}\label{CP_war}
\mu_{kl}\geq 0,\qquad k,l=0,\dots,d-1,\qquad
\sum_{k,l=0}^{d-1}\mu_{kl}=1.
\end{equation}
\end{Theorem}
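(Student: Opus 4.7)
The plan is to establish both directions separately: sufficiency via an explicit Kraus form, and necessity via the Choi--Jamio{\l}kowski criterion. Trace preservation in both directions will follow from a direct computation exploiting the unitarity of the Weyl operators. The main tool for complete positivity will be the fact that the renormalized Weyl operators $W_{kl}/\sqrt{d}$ form an orthonormal basis of $M_d(\mathbb{C})$, which will diagonalize the Choi matrix automatically.

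For sufficiency, if $\mu_{kl}\ge 0$ then $\{\sqrt{\mu_{kl}}\,W_{kl}\}_{k,l=0}^{d-1}$ is a Kraus family for $\Phi$, so $\Phi$ is completely positive. Moreover, using $W_{kl}^\dagger W_{kl}=\oper$, one computes
\begin{equation}
\mathrm{Tr}\,\Phi[X]=\sum_{k,l=0}^{d-1}\mu_{kl}\,\mathrm{Tr}\bigl(W_{kl}XW_{kl}^\dagger\bigr)=\Bigl(\sum_{k,l}\mu_{kl}\Bigr)\mathrm{Tr}\,X,
\end{equation}
which equals $\mathrm{Tr}\,X$ under the normalization $\sum_{k,l}\mu_{kl}=1$.

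For necessity, I would pass to the Choi matrix $C_\Phi=(\Phi\otimes\mathrm{id})[|\Omega\>\<\Omega|]$ with $|\Omega\>=\sum_{i=0}^{d-1}|i\>\otimes|i\>$. Inserting (\ref{inv}) and writing $|\Omega_{kl}\>:=(W_{kl}\otimes\oper)|\Omega\>$ gives
\begin{equation}
C_\Phi=\sum_{k,l=0}^{d-1}\mu_{kl}\,|\Omega_{kl}\>\<\Omega_{kl}|.
\end{equation}
The orthogonality relation $\mathrm{Tr}(W_{kl}^\dagger W_{mn})=d\,\delta_{km}\delta_{ln}$ translates into $\<\Omega_{kl}|\Omega_{mn}\>=d\,\delta_{km}\delta_{ln}$, so this is already the spectral decomposition of $C_\Phi$ with eigenvalues $d\,\mu_{kl}$. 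By the Choi--Jamio{\l}kowski theorem, complete positivity is equivalent to $C_\Phi\ge 0$, which is equivalent to $\mu_{kl}\ge 0$ for all $k,l$. The trace-preservation identity computed above forces $\sum_{k,l}\mu_{kl}=1$.

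I do not expect any serious obstacle; the only subtle point is confirming that the $|\Omega_{kl}\>$ are pairwise orthogonal, which is where the orthonormal-basis property of the Weyl operators is essential. Everything else is bookkeeping.
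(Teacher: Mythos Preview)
Your proposal is correct and follows essentially the same route as the paper: both arguments diagonalize the Choi matrix using the maximally entangled vectors built from the Weyl operators (the paper writes them with the tensor factors swapped, $|v_{kl}\>=\sum_i|i\>\otimes W_{kl}|i\>$, and states the eigenvalue equation $J(\Phi)|v_{kl}\>=d\mu_{kl}|v_{kl}\>$ directly) and read off the eigenvalues $d\mu_{kl}$. Your separate treatment of sufficiency via an explicit Kraus family is a harmless redundancy, since the Choi criterion already covers both directions.
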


\begin{proof}
Recall that $\Phi$ is a quantum channel if and only if it is completely positive and trace-preserving.
Complete positivity is guaranteed by $J(\Phi)\geq 0$, where $J(\Phi)$ is the associated Choi matrix
\begin{equation}
J(\Phi)=\sum_{i,j=0}^{d-1}e_{ij}\otimes\Phi[e_{ij}],
\end{equation}
and $e_{ij}=|i\>\<j|$. It is easy to see that the Choi matrix satisfies the eigenvalue equation
\begin{equation}
J(\Phi)|v_{kl}\>=d\mu_{kl}|v_{kl}\>
\end{equation}
with
\begin{equation}
|v_{kl}\>=\sum_{i=0}^{d-1}|i\>\otimes W_{kl}|i\>.
\end{equation}
Therefore, $J(\Phi)\geq 0$ is equivalent to $\mu_{kl}\geq 0$. Showing that $\Phi$ preserves the trace if and only if $\sum_{k,l=0}^{d-1}\mu_{kl}=1$ is elementary.
\end{proof}

\section{Construction of the irreducibly covariant quantum channels} \label{IC}

In this section, we follow the method of constructing the irreducibly covariant linear maps presented in \cite{MSD}. Take a finite group $G$ and its irreducible unitary representation $U$. With $U$, we associate the contragredient representation
\begin{equation}\label{contragredient}
U^c(g)=U^T(g^{-1})\equiv\overline{U}(g).
\end{equation}
It turns out that the representation $U\otimes U^c$ is reducible, and therefore it can be expressed as the sum
\begin{equation}\label{UotimesUc}
U\otimes U^c=\bigoplus_{\alpha\in\Theta} m_\alpha \phi_\alpha,
\end{equation}
of irreducible unitary representations $\phi_\alpha$ with multiplicity
\begin{equation}\label{m}
m_\alpha=\frac{1}{|G|}\sum_{g\in G}\chi^\alpha(g^{-1})|\chi^U(g)|^2,
\end{equation}
where $\chi^\alpha$ is the character of the representation $\phi_\alpha$.
The procedure in \cite{MSD} is applicable only to the multiplicity-free case, in which $m_\alpha=1$ for $\alpha\in\Theta$ and $m_\alpha=0$ for $\alpha\notin\Theta$. Then, $U\otimes U^c$ is called simply reducible. The linear map $\Phi$ that is irreducibly covariant with respect to the group $G$ is constructed in terms of its spectral decomposition,
\begin{equation}\label{Phi}
\Phi=\sum_{\alpha\in\Theta}\ell_\alpha\Pi_\alpha.
\end{equation}
In the above formula, $\ell_\alpha$ are the eigenvalues of $\Phi$, and
\begin{equation}\label{Pi}
\Pi_\alpha[X]=\frac{d_\alpha}{|G|}\sum_{g\in G}\chi^\alpha(g^{-1})
U(g)XU^\dagger(g),
\end{equation}
where $|G|$ denotes the order of $G$, $d_\alpha=\dim\phi_\alpha$. Additionally, $\Phi$ is a quantum channel if and only if it is completely positive and trace-preserving.

\begin{Proposition}\label{speccase}
The irreducibly covariant quantum channel (\ref{Phi}) is a special case of the covariant quantum channel in Theorem \ref{tw}, where $U$ is the irreducible unitary representation, $U\otimes U^c$ is simply reducible, and
\begin{equation}
\mu(g)=\frac{1}{|G|}\sum_{\alpha\in\Theta}d_\alpha\ell_\alpha\overline{\chi}^\alpha(g).
\end{equation}
\end{Proposition}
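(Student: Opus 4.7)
My plan is to show directly that substituting the explicit form of the projectors $\Pi_\alpha$ into the spectral decomposition (\ref{Phi}) brings the map to the convolutional form (\ref{kanal}) with the advertised coefficient $\mu(g)$. The algebraic content is a single rearrangement of sums, but the conceptual content is that the class-function character of $\mu(g)$ is automatic, which is precisely the condition identified in the discussion following equation (\ref{1}) for a map of the form (\ref{kanal}) to be $U$-covariant.

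Concretely, I would first insert (\ref{Pi}) into (\ref{Phi}) to obtain
\begin{equation*}
\Phi \;=\; \sum_{\alpha\in\Theta}\ell_\alpha\,\frac{d_\alpha}{|G|}\sum_{g\in G}\chi^\alpha(g^{-1})\,\mathrm{Ad}_{U(g)}.
\end{equation*}
Next, I would interchange the two finite sums (legal since $\Theta$ is finite) and factor out $\mathrm{Ad}_{U(g)}$ to read off
\begin{equation*}
\Phi \;=\; \sum_{g\in G}\mu(g)\,\mathrm{Ad}_{U(g)},\qquad \mu(g)=\frac{1}{|G|}\sum_{\alpha\in\Theta}d_\alpha\,\ell_\alpha\,\chi^\alpha(g^{-1}).
\end{equation*}
Then I would invoke unitarity of the representations $\phi_\alpha$ to rewrite $\chi^\alpha(g^{-1})=\overline{\chi^\alpha(g)}$, which yields exactly the formula stated in the proposition.

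To complete the identification with the setting of Theorem \ref{tw}, I would note that characters are class functions, so $\mu(hgh^{-1})=\mu(g)$ for every $h\in G$; this is precisely the condition derived between equations (\ref{1}) and (\ref{warunki}) that guarantees $\Phi$ is $U$-covariant in the sense of (\ref{cov_def}). The complete positivity and trace preservation of $\Phi$ in the form (\ref{Phi}) are assumed in the hypothesis ($\Phi$ is a quantum channel), so no additional inequalities on $\ell_\alpha$ need be derived here; correspondingly, this exhibits (\ref{Phi}) as a member of the class (\ref{kanal}), with the caveat already highlighted after Theorem \ref{tw} that the sufficient conditions (\ref{warunki}) need not hold for $\mu(g)$ produced in this way.

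The only point that requires care is purely notational: the sum in (\ref{kanal}) runs over all of $G$, whereas (\ref{Phi}) is indexed by the (smaller) set $\Theta$ of irreducible components of $U\otimes U^c$. The passage from one indexing to the other is accomplished entirely by the character sum, so I do not anticipate a genuine obstacle; the main thing to get right is the convention $\chi^\alpha(g^{-1})=\overline{\chi^\alpha(g)}$ and the placement of $d_\alpha/|G|$, which together fix the normalization of $\mu(g)$.
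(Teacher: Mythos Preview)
Your proposal is correct and is precisely the natural argument: the paper in fact states this proposition without proof, treating it as an immediate consequence of substituting (\ref{Pi}) into (\ref{Phi}) and reading off $\mu(g)$. Your write-up makes explicit the one-line rearrangement of sums, the identity $\chi^\alpha(g^{-1})=\overline{\chi^\alpha(g)}$ for unitary $\phi_\alpha$, and the class-function property of $\mu$ ensuring covariance, all of which the paper leaves implicit.
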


The following theorem shows that the Weyl channels belong to the class presented in Proposition \ref{speccase}.

\begin{Theorem}
For every $U_\alpha$ and $\overline{U}_\alpha$, decomposition (\ref{UotimesUc}) is multiplicity-free -- that is, the only non-vanishing multiplicities correspond to $\phi_\alpha$ and equal $m_\alpha=1$.
\end{Theorem}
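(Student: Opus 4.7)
My plan is to apply the multiplicity formula (\ref{m}) directly, reducing the problem to a character calculation on the center of $G$. The decisive observation is that the character $\chi^{U_\alpha}$ is supported only on the center $Z(G)=\{\omega^mW_{00}\mid m=0,\dots,d-1\}$. Indeed, from Proposition \ref{prop}, $U_\alpha(\omega^m W_{kl})=\omega^{m\alpha^2}W_{\alpha k,\alpha l}$ (with $\alpha^2$ appearing because the central element $\omega W_{00}=[W_{10},W_{01}]$ is sent to $[W_{\alpha 0},W_{0\alpha}]=\omega^{\alpha^2}W_{00}$), and $\mathrm{Tr}\,W_{\alpha k,\alpha l}=d\,\delta_{\alpha k,0}\delta_{\alpha l,0}$. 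Since $d$ is prime and $\alpha\in\{1,\dots,d-1\}$, this trace vanishes unless $(k,l)=(0,0)$. Hence $|\chi^{U_\alpha}(g)|^2=d^2$ for $g\in Z(G)$ and $0$ elsewhere. The same conclusion holds verbatim for $\overline{U}_\alpha$, whose character on the center differs only by complex conjugation and therefore has the identical absolute value.

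With this in hand, the multiplicity formula collapses to
\begin{equation}
m_\beta=\frac{d^2}{|G|}\sum_{m=0}^{d-1}\chi^\beta(\omega^{-m}W_{00})=\frac{1}{d}\sum_{m=0}^{d-1}\chi^\beta(\omega^{-m}W_{00})
\end{equation}
for each irreducible representation $\beta$ of $G$. I would then split the analysis into the two families provided by the earlier proposition. For the $d^2$ one–dimensional representations $\phi_n$, the key fact is that the commutator subgroup $[G,G]$ coincides with $Z(G)$: the relation $[W_{10},W_{01}]=\omega W_{00}$ shows $Z(G)\subseteq[G,G]$, and counting one–dimensional characters gives $|G/[G,G]|=d^2=|G|/|Z(G)|$, so equality holds. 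Therefore every $\phi_n$ is trivial on $Z(G)$, and the sum above equals $d$, giving $m_{\phi_n}=1$.

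For the remaining $d$–dimensional representations $U_\gamma$ and $\overline{U}_\gamma$, I would use $\chi^{U_\gamma}(\omega^{-m}W_{00})=d\,\omega^{-m\gamma^2}$. Then $\sum_m \omega^{-m\gamma^2}=d\,\delta_{\gamma^2,0\bmod d}$, which vanishes for prime $d$ and $\gamma\in\{1,\dots,d-1\}$ because $\gamma^2\not\equiv 0\pmod d$. Hence $m_{U_\gamma}=m_{\overline{U}_\gamma}=0$, and the decomposition (\ref{UotimesUc}) reduces to $U_\alpha\otimes U_\alpha^c=\bigoplus_{n=0}^{d^2-1}\phi_n$, which is multiplicity-free.

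The main obstacle I anticipate is conceptual rather than computational: one must be careful that $U_\alpha$ respects the group relations, which forces the $\omega^{\alpha^2}$ factor on the center and is precisely what makes the sum over the center orthogonal to the $d$-dimensional characters. Everything else is a direct character-table manipulation, and once primality of $d$ is used to ensure $\gamma^2\not\equiv 0$, the result follows.
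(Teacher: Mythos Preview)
Your proof is correct and follows essentially the same route as the paper: both arguments observe that $|\chi^{U_\alpha}|^2$ is supported only on the central classes $\mathcal{C}_0^m$, collapse the multiplicity formula to a sum over the center, and then evaluate that sum separately for the one-dimensional and the $d$-dimensional irreducibles. The only noteworthy difference is that for the one-dimensional characters the paper simply reads off $\chi^{\phi_n}(\mathcal{C}_0^m)=1$ from the character table, whereas you supply the group-theoretic justification $[G,G]=Z(G)$; and for the $d$-dimensional characters the paper writes the sum as $\sum_k\omega^k$ after an implicit reindexing, while you keep it as $\sum_m\omega^{-m\gamma^2}$ and invoke primality explicitly to conclude it vanishes.
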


\begin{proof}
From the character table, we know that
\begin{equation}
\left|\chi_\beta^{U_\alpha}(\mathcal{C}_{kl})\right|^2=
\left|\chi_\beta^{\overline{U}_\alpha}(\mathcal{C}_{kl})\right|^2=
0
\end{equation}
for $(k,l)\neq(0,0)$ and
\begin{equation}
\left|\chi_\beta^{U_\alpha}(\mathcal{C}_{0}^m)\right|^2=
\left|\chi_\beta^{\overline{U}_\alpha}(\mathcal{C}_{0}^m)\right|^2=
d^2
\end{equation}
for $m=0,\dots,d-1$.
This allows us to simplify eq. (\ref{m}) to
\begin{equation}
m_\alpha=\frac 1d \sum_{g\in\bigcup_{m=0}^{d-1}\mathcal{C}_0^m}\chi^\alpha(g^{-1})=1
\end{equation}
for the one-dimensional representations $\phi_\alpha$ and
\begin{equation}
m_{U_\alpha}=\frac 1d \sum_{g\in\bigcup_{m=0}^{d-1}\mathcal{C}_0^m}\chi_\beta^{U_\alpha}(g^{-1})=
\sum_{k=0}^{d-1}\omega^k=0
\end{equation}
for the $d$-dimensional representations $U_\alpha$. Analogical calculations show that $m_{\overline{U}_\alpha}=0$.
\end{proof}

Now, using the unitary representation $U_1$ of the group $G$, we construct the irreducibly covariant linear map
\begin{equation}\label{phi}
\Phi=\sum_{k,l=0}^{d-1}\ell_{kl}\Pi_{kl},
\end{equation}
where, from definition (\ref{Pi}),
\begin{equation}\label{proj}
\Pi_{kl}[X]=\frac{1}{d^2}\sum_{m,n=0}^{d-1}\omega^{-mk+nl}W_{mn}XW_{mn}^\dagger.
\end{equation}

\begin{Proposition}
{The operators given in (\ref{proj}) are the rank-1 projectors onto the Weyl operators,}
\begin{equation}\label{proj2}
\Pi_{kl}[X]=\frac{1}{d}W_{kl}\Tr(W_{kl}^\dagger X).
\end{equation}
\end{Proposition}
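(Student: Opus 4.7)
The plan is to evaluate $\Pi_{kl}[X]$ on the orthonormal Weyl basis $\{W_{rs}/\sqrt{d}\}$ of $M_d(\mathbb{C})$, which by linearity suffices to pin down the map. The key observation is that conjugation of one Weyl operator by another produces only a phase.

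First I would compute the conjugation $W_{mn}W_{rs}W_{mn}^\dagger$. Using $W_{mn}^\dagger=\omega^{mn}W_{-m,-n}$ and applying the multiplication rule (\ref{group_action}) twice, the composition collapses to a single scalar times $W_{rs}$; concretely, one obtains a phase of the form $\omega^{ms-rn}$ (up to the authors' sign conventions), so that $\mathrm{Ad}_{W_{mn}}(W_{rs})=\omega^{ms-rn}W_{rs}$. This is the only nontrivial algebraic manipulation required; it is a direct calculation from the Weyl relations the paper has already stated.

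Plugging this into (\ref{proj}) separates the double sum into a product of two geometric sums over $m$ and $n$, each of which collapses via the standard identity $\sum_{p=0}^{d-1}\omega^{pa}=d\,\delta_{a\equiv 0\,(\mathrm{mod}\,d)}$. The result is that $\Pi_{kl}[W_{rs}]$ vanishes except when the exponents match the indices $(k,l)$, in which case it returns the single Weyl operator itself. This shows $\Pi_{kl}$ annihilates all Weyl basis elements except exactly one, confirming the rank-1 projector structure.

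To finish, I would expand an arbitrary $X\in M_d(\mathbb{C})$ in the Weyl basis via $X=\frac{1}{d}\sum_{r,s}\Tr(W_{rs}^\dagger X)\,W_{rs}$, apply $\Pi_{kl}$ termwise, and observe that only the single surviving term reproduces the right-hand side of (\ref{proj2}). The one thing to be careful about is matching the index labels $(k,l)$ on both sides of the asserted formula against the labels produced by the surviving Kronecker deltas; this may require a relabeling $(r,s)\leftrightarrow(k,l)$ consistent with the character-table conventions adopted earlier in the paper. The main obstacle is purely bookkeeping the phases through the Weyl relations; no new structural idea is needed.
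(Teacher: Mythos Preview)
Your proposal is correct and follows essentially the same route as the paper: both verify the identity by checking that (\ref{proj}) and (\ref{proj2}) agree on the Weyl basis, yielding $\Pi_{kl}[W_{mn}]=\delta_{km}\delta_{nl}W_{mn}$, and then invoke linearity. You simply spell out the intermediate conjugation formula $\mathrm{Ad}_{W_{mn}}(W_{rs})=\omega^{ms-rn}W_{rs}$ and the geometric-sum collapse that the paper leaves implicit; your caution about matching the index labels is well placed, but otherwise there is nothing to add.
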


\begin{proof}
{To prove this,} it is enough to see that, when acting on the unitary basis of the Weyl operators, eqs. (\ref{proj}-\ref{proj2}) both produce
\begin{equation}
\Pi_{kl}[W_{mn}]=\delta_{km}\delta_{nl}W_{mn}.
\end{equation}
\end{proof}

\begin{Theorem}
The irreducibly covariant linear map defined in (\ref{phi}) with $\Pi_{kl}$ given by (\ref{proj}) is the Weyl map -- that is,
\begin{equation}\label{WeylCh}
\Phi[X]=\sum_{k,l=0}^{d-1}p_{kl}W_{kl}XW_{kl}^\dagger.
\end{equation}
Moreover, it is the Weyl channel if and only if $p_{kl}\geq 0$ and $\sum_{k,l=0}^{d-1}p_{kl}=1$.
\end{Theorem}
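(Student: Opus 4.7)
The plan is to reduce everything to a direct computation and then invoke Theorem \ref{quantum_channel}. First I would substitute definition (\ref{proj}) of $\Pi_{kl}$ into the spectral decomposition (\ref{phi}) and interchange the two double sums, grouping the terms by the adjoint action $W_{mn}(\,\cdot\,)W_{mn}^\dagger$. This immediately yields
\begin{equation}
\Phi[X]=\sum_{m,n=0}^{d-1}p_{mn}W_{mn}XW_{mn}^\dagger,\qquad p_{mn}=\frac{1}{d^2}\sum_{k,l=0}^{d-1}\ell_{kl}\,\omega^{-mk+nl},
\end{equation}
which exhibits $\Phi$ as a Weyl map in the sense of (\ref{inv}); the coefficients $p_{mn}$ are just a (two-dimensional) discrete Fourier transform of the eigenvalues $\ell_{kl}$, and this relation is invertible by orthogonality of the characters $\omega^{mk}$, so conversely any Weyl map is of this form.

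For the second statement I would simply invoke Theorem \ref{quantum_channel}: it was already shown there that the Weyl map (\ref{inv}) is a quantum channel if and only if the coefficients in front of $W_{mn}(\,\cdot\,)W_{mn}^\dagger$ are nonnegative and sum to one. Applied to $p_{mn}$, this gives exactly the required necessary and sufficient conditions $p_{kl}\geq 0$ and $\sum_{k,l}p_{kl}=1$.

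I do not foresee any genuine obstacle; the only thing worth being careful about is the bookkeeping of indices and signs in the exponent when interchanging sums, so as to end up with the correct Fourier-type kernel $\omega^{-mk+nl}$ (matching the sign convention used in (\ref{proj})). Alternatively, one can bypass the direct computation by noting that, by Proposition \ref{speccase}, $\Phi$ is of the form (\ref{kanal}) with $\mu(g)$ a combination of characters of $G$, which by (\ref{mu1})--(\ref{mu2}) forces $\Phi$ into the Weyl form (\ref{inv}); then Theorem \ref{quantum_channel} closes the argument.
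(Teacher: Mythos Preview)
Your proposal is correct and follows essentially the same route as the paper: substitute (\ref{proj}) into (\ref{phi}), swap the sums to read off $p_{mn}=\frac{1}{d^2}\sum_{k,l}\omega^{-mk+nl}\ell_{kl}$, and then invoke Theorem \ref{quantum_channel} for the channel conditions. The paper does not spell out the invertibility of the Fourier relation or the alternative via Proposition \ref{speccase}, but those are harmless additions rather than a different argument.
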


\begin{proof}
If we substitute (\ref{proj}) into (\ref{phi}), we see that
\begin{equation}
\Phi[X]=\frac{1}{d^2}\sum_{k,l,m,n=0}^{d-1}\omega^{-mk+nl}\ell_{kl}W_{mn}XW_{mn}^\dagger,
\end{equation}
which for
\begin{equation}\label{p}
p_{mn}:=\frac{1}{d^2}\sum_{k,l=0}^{d-1}\omega^{-mk+nl}\ell_{kl}
\end{equation}
is exactly (\ref{WeylCh}). Complete positivity and preserving the trace follow from Theorem \ref{quantum_channel} for $\ell_{kl}=\overline{\nu}_{kl}$ and $p_{kl}=\mu_{kl}$.
\end{proof}

\section{Generalized Pauli channels}

For a prime $d$, there exists a subclass of the Weyl channels with additional symmetries, known as the generalized Pauli channels \cite{Ruskai,DCKS,DCKS2}.
{The} Weyl channel $\Phi$ in (\ref{WeylCh}) is the generalized Pauli channel $\Phi_{GPC}$ if and only if
\begin{equation}\label{GPC_group}
\ell_{\alpha k,\alpha l}=\ell_{kl},\qquad p_{\alpha k,\alpha l}=p_{kl}.
\end{equation}
One finds the following Kraus representation,
\begin{equation}\label{GPC_form}
\Phi_{GPC}[X]=\pi_0X+\frac{1}{d-1}\left[\sum_{k=1}^{d}\pi_k\sum_{\alpha=1}^{d-1}
W_{\alpha k,\alpha}XW_{\alpha k,\alpha}^\dagger+\pi_{d+1}\sum_{\alpha=1}^{d-1}W_{\alpha 0}XW_{\alpha 0}^\dagger\right],
\end{equation}
with $\pi_k\geq 0$, $\sum_{k=0}^{d+1}\pi_k=1$. Therefore, to obtain the generalized Pauli channels, we need to impose additional conditions on the quantum channels that are irreducibly covariant with respect to the group $G$ generated by the Weyl operators.

Let us recall that two unitary representations $V_\alpha$ and $V_\beta$ of the group $G$ are equivalent if and only if there exists the transformation matrix $S$ such that
\begin{equation}\label{equiv}
\bigforall_{g\in G}\quad V_\beta(g)=SV_\alpha(g)S^\dagger.
\end{equation}

\begin{Proposition}\label{prop2}
For the group generated by the Weyl operators, there are the following pairs of equivalent unitary representations,
\begin{equation}
\{U_\alpha,U_{d-\alpha}\},\quad\{\overline{U}_\alpha,\overline{U}_{d-\alpha}\},
\quad \alpha=1,\dots,d-1,
\end{equation}
where $S$ is the following permutation matrix
\begin{equation}\label{S}
S=\sum_{m=0}^{d-1}|m\>\<-m|.
\end{equation}
\end{Proposition}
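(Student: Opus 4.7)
The plan is to reduce the equivalence condition (\ref{equiv}) to a single conjugation identity on the Weyl operators, namely
\begin{equation}
S W_{kl} S^\dagger = W_{-k,-l},
\end{equation}
and then match this against the explicit formulas (\ref{U}) and (\ref{Ubar}) defining $U_\alpha$ and $\overline{U}_\alpha$.

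First I would verify that $S$ from (\ref{S}) is an involution: $S^2=\sum_{m,n}|m\>\<-m|n\>\<-n|=\sum_m|m\>\<m|=\mathbb{1}$, so $S^\dagger=S$. Next, I would compute $SW_{kl}S^\dagger$ directly from the definition $W_{kl}=\sum_m \omega^{km}|m+l\>\<m|$. Writing $W_{kl}=X^l Z^k$ with $X|m\>=|m+1\>$ and $Z|m\>=\omega^m|m\>$, the identities $SXS=X^{-1}$ and $SZS=Z^{-1}$ are immediate from $S|m\>=|-m\>$, and therefore
\begin{equation}
S W_{kl} S^\dagger = (SXS)^l (SZS)^k = X^{-l}Z^{-k} = W_{-k,-l}.
\end{equation}

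Then I would check the equivalence on the generators $\omega^m W_{kl}$ of $G$. Scalars $\omega^m$ commute with $S$, so for the representation $U_\alpha$ we get
\begin{equation}
S U_\alpha(\omega^m W_{kl}) S^\dagger = \omega^m S W_{\alpha k,\alpha l} S^\dagger = \omega^m W_{-\alpha k,-\alpha l} = U_{d-\alpha}(\omega^m W_{kl}),
\end{equation}
since $-\alpha\equiv d-\alpha \pmod d$. On the central elements $\omega^m W_{00}$, both $U_\alpha$ and $U_{d-\alpha}$ act as $\omega^{m\alpha^2}\mathbb{1}$ because $(d-\alpha)^2\equiv \alpha^2 \pmod d$, and $\mathbb{1}$ trivially commutes with $S$. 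An identical argument, applied with the substitution (\ref{Ubar}), gives $S\overline{U}_\alpha(g)S^\dagger=\overline{U}_{d-\alpha}(g)$ for every $g\in G$.

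The only nontrivial step is the computation $SW_{kl}S^\dagger=W_{-k,-l}$; once that is in hand, everything else is a matter of bookkeeping the phases and matching against the indices $\alpha$ and $d-\alpha$. There is no real obstacle, because the proposition asserts that a specific permutation matrix intertwines the representations, and one simply checks it. I would present the proof in the order: (i) $S^\dagger=S$ and $S^2=\mathbb{1}$; (ii) the conjugation formula for $W_{kl}$; (iii) equivalence of $U_\alpha$ with $U_{d-\alpha}$; (iv) equivalence of $\overline{U}_\alpha$ with $\overline{U}_{d-\alpha}$.
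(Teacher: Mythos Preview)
Your argument is correct. Both the paper and you reduce the problem to the single identity $SW_{kl}S^\dagger=W_{-k,-l}$, but you run the logic in opposite directions. The paper first notes (from the character table discussion in Proposition~\ref{prop}) that $U_\alpha$ and $U_{d-\alpha}$ must be equivalent, then expands an unknown intertwiner as $S=\sum_{m,n}s_{mn}W_{mn}$, imposes the intertwining condition $W_{kl}=SW_{-k,-l}S^\dagger$, and solves for the coefficients to obtain~(\ref{S}). You instead take the matrix $S$ as given by the statement and verify directly, via the factorization $W_{kl}=X^lZ^k$ and the elementary identities $SXS=X^{-1}$, $SZS=Z^{-1}$, that it intertwines $U_\alpha$ with $U_{d-\alpha}$ (and likewise for $\overline{U}_\alpha$). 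Your route is cleaner as a proof of the proposition as stated, since $S$ is already specified and verification is all that is required; the paper's route has the complementary virtue of \emph{deriving} $S$ from the equivalence condition rather than checking a formula pulled from thin air.
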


\begin{proof}
The fact that the representations $U_\alpha$ and $U_{d-\alpha}$ (and also $\overline{U}_\alpha$ and $\overline{U}_{d-\alpha}$) are equivalent has been discussed in the proof to Proposition \ref{prop}. To find the transformation matrix $S$, we assume that
\begin{equation} \label{}
S=\sum_{m,n=0}^{d-1}s_{mn}W_{mn}.
\end{equation}
For the unitary representations $U_\alpha$ and $\overline{U}_\alpha$ of the group $G$, eq. (\ref{equiv}) implies that
\begin{equation}
\bigforall_{k,l=0,\dots,d-1}\quad W_{kl}=SW_{-k,-l}S^\dagger.
\end{equation}
Using the group action of $G$ generated by the Weyl operators $W_{kl}$, one gets (\ref{S}).
\end{proof}

\begin{Remark}
The transformation matrix $S$ from Proposition \ref{prop2} recovers $A_{00}$
in the definition of the discrete Wigner function \cite{Wootters},
\begin{equation}
\mathcal{W}_{kl}=\frac 1d \Tr(\rho A_{kl}),
\end{equation}
where
\begin{equation}\label{A}
A_{kl}=\sum_{m=0}^{d-1}\omega^{2(m-k)l}|m\>\<-m+2k|.
\end{equation}
\end{Remark}

\begin{Proposition}\label{prop3}
Let $S$ be the transformation matrix given in Proposition \ref{prop2}. If the Weyl channel $\Phi$ is covariant with respect to $S$, that is,
\begin{equation}\label{S_cov}
\Phi[SXS^\dagger]=S\Phi[X]S^\dagger,
\end{equation}
then it possesses real eigenvalues.
\end{Proposition}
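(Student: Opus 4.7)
The plan is to reduce the $S$-covariance of $\Phi$ to a symmetry relation on its eigenvalues in the Weyl basis, and then invoke the reality of the Kraus coefficients to turn this symmetry into reality. From the spectral decomposition in the previous section together with $\Pi_{kl}[W_{mn}]=\delta_{km}\delta_{ln}W_{mn}$, the Weyl operators already diagonalize $\Phi$, with $\Phi[W_{mn}]=\ell_{mn}W_{mn}$. So the eigenvalues of $\Phi$ are precisely the $\ell_{mn}$, and it suffices to show each $\ell_{mn}\in\mathbb{R}$.

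The first step is to compute how conjugation by $S$ acts on a Weyl operator. Since $S|n\rangle=|-n\rangle$ (indices mod $d$) and $S=S^\dagger$, a direct one-line calculation using the explicit form of $W_{kl}$ gives
\begin{equation}
SW_{kl}S^\dagger=W_{-k,-l}.
\end{equation}
This is essentially the same identity already exploited in the proof of Proposition \ref{prop2}, where $S$ was identified as the intertwiner sending $W_{kl}\mapsto W_{-k,-l}$.

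The second step is to evaluate the hypothesis (\ref{S_cov}) on $X=W_{kl}$. The left-hand side equals $\Phi[W_{-k,-l}]=\ell_{-k,-l}W_{-k,-l}$, while the right-hand side equals $\ell_{kl}\,SW_{kl}S^\dagger=\ell_{kl}W_{-k,-l}$. Comparing coefficients of the (nonzero) operator $W_{-k,-l}$ yields the index-flip symmetry
\begin{equation}
\ell_{-k,-l}=\ell_{kl}.
\end{equation}

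The third step combines this with the Fourier inversion of (\ref{p}), namely $\ell_{kl}=\sum_{m,n}\omega^{mk-nl}p_{mn}$. Because $\Phi$ is a Weyl \emph{channel}, Theorem \ref{quantum_channel} forces $p_{mn}\geq 0$, and in particular real; taking the complex conjugate then gives $\overline{\ell_{kl}}=\sum_{m,n}\omega^{-mk+nl}p_{mn}=\ell_{-k,-l}$. Combined with the symmetry from step two, $\overline{\ell_{kl}}=\ell_{-k,-l}=\ell_{kl}$, so $\ell_{kl}\in\mathbb{R}$.

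The only mildly delicate point is the first step, but it reduces to the same calculation already used to construct $S$ in Proposition \ref{prop2}; the remaining manipulations are routine. Conceptually, the argument is simply that $S$-covariance forces $\Phi$ to commute with the involution $W_{kl}\mapsto W_{-k,-l}$, which on the Fourier side is complex conjugation of the eigenvalues.
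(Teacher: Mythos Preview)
Your proof is correct and follows essentially the same route as the paper: compute $SW_{kl}S^\dagger=W_{-k,-l}$, deduce from $S$-covariance that $\ell_{-k,-l}=\ell_{kl}$, and combine this with $\overline{\ell_{kl}}=\ell_{-k,-l}$ (which holds because the $p_{mn}$ are real for a channel) to conclude $\ell_{kl}\in\mathbb{R}$. The paper's presentation is terser and reverses the order of the last two steps, but the logical content is identical.
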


\begin{proof}
Calculate the inverse relation between $p_{kl}$ and $\ell_{kl}$ from formula (\ref{p}),
\begin{equation}
\ell_{mn}=\sum_{k,l=0}^{d-1}\omega^{-mk+nl}p_{kl}.
\end{equation}
The eigenvalues $\ell_{mn}$ are real if and only if
\begin{equation}\label{real_EV}
\ell_{d-m,d-n}={\ell}_{mn}.
\end{equation}
For the Weyl operators and $S$ from Proposition \ref{prop2}, we have
\begin{equation}
SW_{mn}S^\dagger=W_{-m,-n}.
\end{equation}
Hence, we see that $\Phi$ satisfies eq. (\ref{S_cov}) if and only if
(\ref{real_EV}) holds.
\end{proof}

\begin{example}\label{ex}
For $d=3$, the Weyl channel $\Phi$ is given by its eigenvalues $\ell_{kl}$. Condition (\ref{real_EV}) implies that
\begin{equation}
\ell_{01}=\ell_{02},\quad \ell_{10}=\ell_{20},\quad \ell_{11}=\ell_{22},\quad \ell_{12}=\ell_{21}.
\end{equation}
Observe that the eigenvalues of $\Phi$ are grouped as in (\ref{GPC_group}). Therefore, eq. (\ref{S_cov}) provides the necessary and sufficient conditions for the three-dimensional Weyl channel $\Phi$ to be the generalized Pauli channel. However, for a prime $d>3$, conditions (\ref{S_cov}) are no longer sufficient.
\end{example}

\begin{Theorem}\label{Th}
Suppose that the Weyl channel $\Phi$ satisfies (\ref{S_cov}); that is, $\ell_{kl}$ are real. For every $\beta=1,\dots,\frac{d-1}{2}$, use the irreducible representation $U_\beta$ to construct the Weyl channel via
\begin{equation}\label{QCmu}
\Phi_\beta := \sum_{k,l=0}^{d-1}\ell_{kl}\Pi_{kl}^{(\beta)},
\end{equation}
where $\Pi_{kl}^{(\beta)}$ are the rank-1 projectors onto $W_{\beta k,\beta l}$.
Now, $\Phi$ is the generalized Pauli channel if and only if
\begin{equation}\label{th}
\Phi=\Phi_\beta ,
\end{equation}
for any ${\beta=1,\dots,\frac{d-1}{2}}$.
\end{Theorem}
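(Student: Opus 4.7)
My plan is to translate the equality $\Phi=\Phi_\beta$ into a symmetry condition on the eigenvalues $\ell_{kl}$ and then match this condition against the defining GPC relation (\ref{GPC_group}). First I would rewrite $\Phi_\beta$ in the basis of the ``original'' projectors $\Pi_{mn}$ of (\ref{proj}). The rank-one projector onto $W_{\beta k,\beta l}$ acts as $\Pi_{kl}^{(\beta)}[X]=\frac{1}{d}W_{\beta k,\beta l}\Tr(W_{\beta k,\beta l}^\dagger X)$, and since $d$ is prime every $\beta\in\{1,\dots,(d-1)/2\}$ is invertible in $\mathbb{Z}_d$, so $(k,l)\mapsto(\beta k,\beta l)\bmod d$ is a bijection of $\{0,\dots,d-1\}^2$. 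Relabeling the summation index yields
\begin{equation*}
\Phi_\beta=\sum_{k,l=0}^{d-1}\ell_{kl}\Pi_{kl}^{(\beta)}=\sum_{m,n=0}^{d-1}\ell_{\beta^{-1}m,\beta^{-1}n}\Pi_{mn}.
\end{equation*}
Because $\Pi_{mn}[W_{m'n'}]=\delta_{mm'}\delta_{nn'}W_{mn}$, the superoperators $\Pi_{mn}$ are linearly independent, so $\Phi=\Phi_\beta$ is equivalent to the scalar identity $\ell_{\beta k,\beta l}=\ell_{kl}$ for every $(k,l)$.

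For the forward direction, if $\Phi$ is a generalized Pauli channel then (\ref{GPC_group}) gives $\ell_{\alpha k,\alpha l}=\ell_{kl}$ for every $\alpha\in\{1,\dots,d-1\}$, in particular for $\alpha=\beta$, so by the equivalence just derived $\Phi=\Phi_\beta$ for every $\beta\in\{1,\dots,(d-1)/2\}$.

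The substantive direction is the converse. Assume $\Phi=\Phi_\beta$ for every $\beta\in\{1,\dots,(d-1)/2\}$, so $\ell_{\beta k,\beta l}=\ell_{kl}$ for each such $\beta$. Proposition \ref{prop3}, combined with the hypothesis (\ref{S_cov}), supplies the reality symmetry $\ell_{-k,-l}=\ell_{kl}$. Any $\alpha\in\{(d+1)/2,\dots,d-1\}$ can be written as $\alpha=d-\beta$ with $\beta\in\{1,\dots,(d-1)/2\}$, so
\begin{equation*}
\ell_{\alpha k,\alpha l}=\ell_{-\beta k,-\beta l}=\ell_{\beta k,\beta l}=\ell_{kl}.
\end{equation*}
Thus $\ell_{\alpha k,\alpha l}=\ell_{kl}$ for every $\alpha\in\{1,\dots,d-1\}$. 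Since $p_{kl}$ is the discrete Fourier transform of $\ell_{kl}$ (cf.~(\ref{p})) and multiplication by an invertible $\alpha\in\mathbb{Z}_d$ commutes with this transform on $(\mathbb{Z}_d)^2$, the same invariance transfers automatically to the probabilities $p_{kl}$, yielding the full GPC condition (\ref{GPC_group}). The main obstacle is precisely this last combinatorial step: the truncated range $\beta\le(d-1)/2$ does not generate $(\mathbb{Z}_d)^\ast$ under multiplication on its own, so one must carefully combine the algebraic symmetries $\ell\mapsto\ell_{\beta\cdot,\beta\cdot}$ with the $-1$ factor supplied by (\ref{S_cov}) in order to cover the other half of the residues.
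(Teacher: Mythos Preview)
Your argument is correct and follows the same line as the paper: identify $\Pi_{kl}^{(\beta)}=\Pi_{\beta k,\beta l}$, translate $\Phi=\Phi_\beta$ into the eigenvalue symmetry $\ell_{\beta k,\beta l}=\ell_{kl}$, and match against the GPC condition (\ref{GPC_group}). You are in fact more explicit than the paper in handling the converse direction (using the reality relation $\ell_{-k,-l}=\ell_{kl}$ from (\ref{S_cov}) to extend from $\beta\le(d-1)/2$ to all $\alpha\in\{1,\dots,d-1\}$) and in noting that the invariance of $\ell_{kl}$ transfers to $p_{kl}$ via the Fourier relation (\ref{p}); the paper leaves both of these steps implicit. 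One harmless slip: your closing remark that $\{1,\dots,(d-1)/2\}$ does not generate $(\mathbb{Z}_d)^\ast$ multiplicatively is false for many primes (e.g.\ $d=5$, where $2$ is a primitive root), but your actual proof never relies on this claim.
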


\begin{proof}
Observe that $\Phi$ and $\Phi_\beta$ have the same spectrum but to different eigenvalues, as eq. (\ref{proj2}) produces
\begin{equation}\label{projectors}
\Pi_{kl}^{(\beta)}[X]=\frac{1}{d}W_{\beta k,\beta l}\Tr(W_{\beta k,\beta l}^\dagger X)=\Pi_{\beta m,\beta n}[X].
\end{equation}
Therefore, if $\Phi$ is the quantum channel, then so are $\Phi_\beta$.
Moreover, formula (\ref{th}), together with (\ref{projectors}), imposes the  constraints $\ell_{\beta k,\beta l}=\ell_{kl}$, which are consistent with condition (\ref{GPC_group}) for the generalized Pauli channels. Note that the assumptions of Theorem \ref{Th} can be modified. If we drop the requirement that the Weyl channel $\Phi$ has real eigenvalues but check condition (\ref{th}) for $\beta=1,\dots,d-1$, then we arrive at the same results.
\end{proof}


\section{Class of covariant positive maps}

Consider the positive covariant linear map
\begin{equation}
\Phi[X] = \sum_{\alpha=0}^{d^2-1} q_\alpha W_\alpha X W_\alpha^\dagger,
\end{equation}
where $W_\alpha$ are the Weyl operators. Recall that a map is positive if for any $X \geq 0$, one has $\Phi[X] \geq 0$ \cite{Paulsen,Bhatia}. Positive maps have received considerable attention recently due to their close relation to the entanglement theory \cite{HHHH}. Recall that a quantum state represented by a density operator $\rho \in \mathcal{B}(\mathcal{H}_1 \ot \mathcal{H}_2)$ is separable if and only if $(\oper_1 \ot \Phi)[\rho] \geq 0$ for all positive maps $\Phi : \mathcal{L}(\mathcal{H}_2) \longrightarrow \mathcal{L}(\mathcal{H}_1)$. Unfortunately, the structure of positive maps is still not fully understood, and the general construction of such maps is not known (see the recent review \cite{KYE,TOPICAL}).

In what follows, we provide two constructions of the positive maps that are covariant with respect to the finite group generated by the Weyl operators. Let $\Delta \subset \{0,1,\ldots,d^2-1\}$ with $|\Delta|=N$. Consider the linear map $\Phi : M_d(\mathbb{C}) \longrightarrow  M_d(\mathbb{C})$ given by
\begin{equation}\label{PI}
  \Phi[X] = \sum_{\alpha \notin  \Delta} \lambda_\alpha^+ F_\alpha X F_\alpha^\dagger +  \sum_{\alpha \in \Delta} \lambda_\alpha^- F_\alpha X F_\alpha^\dagger ,
\end{equation}
with $\lambda_\alpha^+ >0$ and $\lambda_\alpha^- < 0$. The operators $F_\alpha$ define an orthonormal basis in $ M_d(\mathbb{C})$,
\begin{equation}\label{}
  \Tr( F_\alpha^\dagger F_\beta) = \delta_{\alpha\beta} \ ;\quad\alpha,\beta=0,1,\ldots,d^2-1\ .
\end{equation}
Clearly, if there are no negative eigenvalues, the map is completely positive. Now, the problem is to find the `balance' between $\lambda_\alpha^+$ and $\lambda_\alpha^- $ that guarantees the positivity of $\Phi$. Denote the operator norm (the largest singular value) of $A$ by $||A||$.

\begin{Theorem}[\cite{CMP}] If $\sum_{\alpha \in \Delta} || F_\alpha||^2 < 1$ and
\begin{equation}\label{+-}
  \lambda_\alpha^+ \geq \frac{ \sum_{\beta \in \Delta} | \lambda_\beta^-| || F_\beta||^2 }{1- \sum_{\beta \in \Delta} || F_\beta||^2 } \ ; \ \ \alpha \notin \Delta \ ,
\end{equation}
then $\Phi$ defined by (\ref{PI}) is positive.
\end{Theorem}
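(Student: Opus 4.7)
My plan is to reduce positivity to the rank-one case and then exploit a resolution-of-identity for the orthonormal basis $\{F_\alpha\}$. By the spectral theorem every $X \geq 0$ is a non-negative combination of rank-one projectors, so by linearity of $\Phi$ it suffices to show $\Phi[\ket{\psi}\bra{\psi}] \geq 0$ for every unit vector $\ket{\psi}$. Split $\Phi = \Phi_+ - \Phi_-$, where $\Phi_+[X]=\sum_{\alpha \notin \Delta}\lambda_\alpha^+ F_\alpha X F_\alpha^\dagger$ and $\Phi_-[X]=\sum_{\alpha \in \Delta}|\lambda_\alpha^-|F_\alpha X F_\alpha^\dagger$. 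The main tool is the fact that Hilbert--Schmidt orthonormality $\Tr(F_\alpha^\dagger F_\beta)=\delta_{\alpha\beta}$ is equivalent to the matrix completeness relation $\sum_\alpha (F_\alpha)_{ik}\overline{(F_\alpha)_{jl}}=\delta_{ij}\delta_{kl}$, which, applied to $X=\sum_{kl} X_{kl}\ket{k}\bra{l}$, yields the identity
\begin{equation*}
\sum_{\alpha=0}^{d^2-1} F_\alpha X F_\alpha^\dagger = \Tr(X)\,\oper .
\end{equation*}

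Let $\lambda_{\min}^+ := \min_{\alpha \notin \Delta}\lambda_\alpha^+$ and apply the above identity to $X=\ket{\psi}\bra{\psi}$. Since $\lambda_\alpha^+ \geq \lambda_{\min}^+$ on $\alpha \notin \Delta$, the positive part admits the operator lower bound
\begin{equation*}
\Phi_+[\ket{\psi}\bra{\psi}] \geq \lambda_{\min}^+ \Bigl(\oper - \sum_{\alpha \in \Delta} F_\alpha \ket{\psi}\bra{\psi} F_\alpha^\dagger\Bigr),
\end{equation*}
and therefore
\begin{equation*}
\Phi[\ket{\psi}\bra{\psi}] \geq \lambda_{\min}^+\,\oper - \sum_{\alpha \in \Delta}\bigl(\lambda_{\min}^+ + |\lambda_\alpha^-|\bigr) F_\alpha \ket{\psi}\bra{\psi} F_\alpha^\dagger .
\end{equation*}

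It remains to bound each $\Delta$-indexed term by a scalar multiple of $\oper$: since $F_\alpha \ket{\psi}\bra{\psi} F_\alpha^\dagger$ is a rank-one positive operator with top eigenvalue $\|F_\alpha\psi\|^2 \leq \|F_\alpha\|^2$ (as $\ket{\psi}$ is a unit vector), one has $F_\alpha \ket{\psi}\bra{\psi} F_\alpha^\dagger \leq \|F_\alpha\|^2\,\oper$. Substituting gives
\begin{equation*}
\Phi[\ket{\psi}\bra{\psi}] \geq \Bigl[\lambda_{\min}^+\Bigl(1-\sum_{\alpha\in\Delta}\|F_\alpha\|^2\Bigr) - \sum_{\alpha\in\Delta}|\lambda_\alpha^-|\,\|F_\alpha\|^2\Bigr]\oper ,
\end{equation*}
and the bracket is non-negative exactly under the hypothesis (\ref{+-}), after dividing by $1-\sum_{\alpha\in\Delta}\|F_\alpha\|^2>0$. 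The main obstacle is not any individual algebraic manipulation but recognising the right regrouping: one must absorb the entire non-$\Delta$ sum into $\lambda_{\min}^+\,\oper$ via the completeness identity, so that only the combined coefficients $\lambda_{\min}^+ + |\lambda_\alpha^-|$ appear on the residual $\Delta$-indexed operators. After that reduction, the elementary rank-one estimate $F_\alpha \ket{\psi}\bra{\psi} F_\alpha^\dagger \leq \|F_\alpha\|^2\,\oper$ closes the argument.
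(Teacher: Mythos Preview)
Your argument is correct. Note, however, that the paper does not supply its own proof of this theorem: it is quoted verbatim from \cite{CMP} and used as a black box, so there is no in-text proof to compare against. Your proof is essentially the natural one and matches the strategy of the original reference: reduce to rank-one inputs, invoke the completeness identity $\sum_{\alpha} F_\alpha X F_\alpha^\dagger = \Tr(X)\,\oper$ (valid for any Hilbert--Schmidt orthonormal basis), peel off the $\Delta$-indexed terms, and then control each residual rank-one operator by $\|F_\alpha\|^2\,\oper$. Every step is justified and the final inequality coincides with hypothesis~(\ref{+-}).
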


Clearly, conditions (\ref{+-}) are only sufficient but not necessary. Now, taking $F_\alpha = W_\alpha/\sqrt{d}$, one has $|| F_\alpha||^2 = 1/d$, and hence one arrives at the following corollary.

\begin{Corollary} If
\begin{equation}\label{}
  \lambda_\alpha^+ \geq \frac{1}{d-N} \sum_{\beta \in \Delta} | \lambda_\beta^-| \ ; \ \ \alpha \notin \Delta \ ,
\end{equation}
for any $N \leq d-1$ and $F_\alpha = W_\alpha/\sqrt{d}$,
then formula (\ref{PI}) defines a covariant positive map.
\end{Corollary}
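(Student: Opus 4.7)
The plan is to show that this is essentially an immediate specialization of the preceding Theorem [CMP] to the Weyl basis, combined with the observation that any map of the form $\sum_\alpha q_\alpha W_\alpha X W_\alpha^\dagger$ is automatically covariant with respect to the group $G$ generated by the Weyl operators.

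First I would verify the positivity half by plugging $F_\alpha = W_\alpha/\sqrt{d}$ into the Theorem. Since each $W_\alpha$ is unitary, $\|W_\alpha\| = 1$ and therefore $\|F_\alpha\|^2 = 1/d$. The hypothesis $\sum_{\alpha \in \Delta}\|F_\alpha\|^2 < 1$ becomes $N/d < 1$, which is exactly the assumption $N \leq d-1$. Next, the right-hand side of (\ref{+-}) simplifies: the numerator equals $\frac{1}{d}\sum_{\beta \in \Delta}|\lambda_\beta^-|$ and the denominator equals $1 - N/d = (d-N)/d$, so the quotient is $\frac{1}{d-N}\sum_{\beta \in \Delta}|\lambda_\beta^-|$, which is exactly the hypothesis of the Corollary. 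Thus the Theorem applies and $\Phi$ is positive.

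Then I would address covariance. The map has the form
\begin{equation}
\Phi[X] = \sum_{\alpha=0}^{d^2-1} q_\alpha W_\alpha X W_\alpha^\dagger ,
\end{equation}
with $q_\alpha = \lambda_\alpha^+/d$ for $\alpha \notin \Delta$ and $q_\alpha = \lambda_\alpha^-/d$ for $\alpha \in \Delta$. Since $G$ is generated by $\{W_\alpha\}$, covariance with respect to the representation $U_1$ follows from the general discussion around (\ref{kanal})--(\ref{inv}): for any $g \in G$ one has $U_1(g)W_\alpha U_1^\dagger(g) = \omega^{\phi(g,\alpha)}W_\alpha$ for some phase, and the phases cancel in the conjugation $W_\alpha X W_\alpha^\dagger$. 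Equivalently, the coefficients $q_\alpha$ are constant on conjugacy classes $\mathcal{C}_\alpha$, which by the analysis preceding Theorem \ref{tw} is precisely the condition for $\Phi$ to be covariant.

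The argument is entirely routine; no step requires a real obstacle. The only point worth noting is that the Corollary gives sufficient but not necessary conditions (just as the Theorem it specializes), and that positivity here is guaranteed without appealing to complete positivity, so $\Delta$ may genuinely be non-empty while $\Phi$ remains positive.
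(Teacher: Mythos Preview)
Your proposal is correct and matches the paper's approach exactly: the paper simply notes that $\|F_\alpha\|^2 = 1/d$ for $F_\alpha = W_\alpha/\sqrt d$ and states the Corollary as an immediate specialization of the preceding Theorem. You are more thorough in also spelling out the covariance, which the paper leaves implicit from the earlier discussion of Weyl maps.
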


The above construction of covariant positive maps allows at most `$d-1$' negative $\lambda_\alpha$.

\begin{example}
For $d=2$, let us take $\lambda^-_0=-1$. Now, if $\lambda_\alpha^+ \geq 1$, then the map
\begin{equation}\label{}
  \Phi[X] = \frac 12 \left( \sum_{\alpha=1}^3  \lambda_\alpha^+ \sigma_\alpha X \sigma_\alpha - X \right)
\end{equation}
is positive. In particular, taking $ \lambda_\alpha^+ = 1$ and using the well-known property
\begin{equation}\label{}
  \sum_{\alpha=1}^3   \sigma_\alpha X \sigma_\alpha = 2 \mathbb{I}_2 \, \Tr X - X,
\end{equation}
one recovers the celebrated reduction map
\begin{equation}\label{}
 \Phi[X] = \mathbb{I}_2 \Tr X - X.
\end{equation}
\end{example}

\begin{example}
The above example can be generalized to an arbitrary dimension in the following way: take $\lambda_\alpha^-=-1$ for $\alpha=0,1,\ldots,d-2$, and $\lambda_\alpha^+=d-1$ for $\alpha=d-1,\ldots,d^2-1$. Then, the map
\begin{equation}\label{}
  \Phi[X]= \frac{1}{d(d-1)^2} \left\{ (d-1)\sum_{\alpha=d-1}^{d^2-1} W_\alpha X W_\alpha^\dagger -  \sum_{\alpha=0}^{d-2} W_\alpha X W_\alpha^\dagger \right\}
\end{equation}
is positive and trace-preserving.
\end{example}

Let us turn our interest to prime dimensions. By $\{ |\psi^{(\alpha)}_1\>,\ldots,|\psi^{(\alpha)}_{d}\> \}$ with $\alpha=1,\ldots,d+1$, denote $d+1$ mutually unbiased bases (MUBs) in $\mathbb{C}^d$; that is, the orthonormal bases for which
\begin{equation}\label{}
  | \< \psi^{(\alpha)}_k| \psi^{(\beta)}_l\>|^2 = \frac 1d
\end{equation}
when $\alpha \neq \beta$. In \cite{MUBs}, the authors considered the following family of positive, trace-preserving maps,
\begin{equation}\label{Phi-MUB}
\Phi[X]=\frac{1}{d-1}\left[2\mathbb{I}\Tr X-\sum_{\alpha=1}^{d+1}
\sum_{k,l=0}^{d-1} \mathcal{O}_{kl}^{(\alpha)} \Tr(XP_l^{(\alpha)})P_k^{(\alpha)}\right].
\end{equation}
In the above formula, $\mathcal{O}^{(\alpha)}$ are rotations in $\mathbb{R}^d$ around the axis determined by $\mathbf{n}=(1,\ldots,1)$; i.e., $\mathcal{O}^{(\alpha)} \mathbf{n}=\mathbf{n}$.

\begin{Proposition}
The map (\ref{Phi-MUB}) is Weyl-covariant if and only if $\mathcal{O}^{(\alpha)} = \mathbb{I}_d$ for $\alpha=1,\ldots,d+1$.
\end{Proposition}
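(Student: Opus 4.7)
The plan is to peel off the part of $\Phi$ that is automatically Weyl-covariant, and to translate Weyl-covariance of what remains into an algebraic condition on the matrices $\mathcal{O}^{(\alpha)}$ that can be analyzed via the Weyl decomposition of the MUB projectors available in prime dimension.

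\textbf{Step 1 (reduction).} First observe that the term $\frac{2}{d-1}\mathbb{I}\Tr X$ is trivially Weyl-covariant: the trace is invariant and $\mathbb{I}$ commutes with every $W_{kl}$. Consequently the covariance of $\Phi$ is equivalent to the covariance of the MUB piece
\[
\Psi[X]\;:=\;\sum_{\alpha=1}^{d+1}\sum_{m,n=0}^{d-1}\mathcal{O}_{mn}^{(\alpha)}\Tr(XP_n^{(\alpha)})P_m^{(\alpha)}.
\]

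\textbf{Step 2 (action on MUB projectors).} For prime $d$, I would use the standard identification of the $d+1$ MUBs with the $d+1$ maximal abelian subgroups of the Weyl group, indexed by lines $L_\alpha\subset\mathbb{Z}_d\times\mathbb{Z}_d$ through the origin. Denoting the Weyl operators in $L_\alpha$ by $V_0^{(\alpha)}=\mathbb{I},V_1^{(\alpha)},\ldots,V_{d-1}^{(\alpha)}$, each projector admits the expansion $P_m^{(\alpha)}=\frac{1}{d}\sum_j\omega^{-mj}V_j^{(\alpha)}$. A direct application of the commutation rule (\ref{group_action}) yields
\[
W_{kl}\,P_m^{(\alpha)}\,W_{kl}^\dagger\;=\;P_{m+\tau_{kl}^{(\alpha)}}^{(\alpha)},
\]
where $\tau_{kl}^{(\alpha)}$ is the symplectic pairing of $(k,l)$ with a generator of $L_\alpha$; it vanishes iff $(k,l)\in L_\alpha$ and otherwise runs over every non-zero residue in $\mathbb{Z}_d$ as $(k,l)$ varies outside $L_\alpha$.

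\textbf{Step 3 (extracting the circulant condition).} Substituting this shift into $W_{kl}\Psi[X]W_{kl}^\dagger=\Psi[W_{kl}XW_{kl}^\dagger]$ and relabelling the summation indices converts the covariance identity into
\[
\sum_{\alpha,m,n}\bigl[\mathcal{O}_{m-\tau,n-\tau}^{(\alpha)}-\mathcal{O}_{mn}^{(\alpha)}\bigr]\,\Tr(XP_n^{(\alpha)})\,P_m^{(\alpha)}\;=\;0\quad(\forall X),\qquad \tau=\tau_{kl}^{(\alpha)}.
\]
Re-expanding the $P_m^{(\alpha)}$ in the Weyl basis and using that distinct lines $L_\alpha$ share only the identity operator (so that non-identity Weyl operators from different $\alpha$ are linearly independent), the equation decouples into one condition per $\alpha$. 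A two-dimensional discrete Fourier transform in $(m,n)$ then gives $(\omega^{\tau(j-i)}-1)\,\widehat{\mathcal{O}}_{ij}^{(\alpha)}=0$, where $\widehat{\mathcal{O}}_{ij}^{(\alpha)}=\sum_{m,n}\omega^{-mi+nj}\mathcal{O}_{mn}^{(\alpha)}$. Since $d$ is prime and $\tau$ attains every non-zero value in $\mathbb{Z}_d$ as $(k,l)$ varies, this collapses to $\widehat{\mathcal{O}}_{ij}^{(\alpha)}=0$ for $i\neq j$, i.e.\ $\mathcal{O}^{(\alpha)}$ is circulant.

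\textbf{Step 4 (closing the equivalence and main obstacle).} The converse is immediate: with $\mathcal{O}^{(\alpha)}=\mathbb{I}_d$ for every $\alpha$, the MUB piece reduces to $\sum_\alpha\sum_m\Tr(XP_m^{(\alpha)})P_m^{(\alpha)}$, which is manifestly Weyl-covariant because Weyl conjugation merely permutes the projectors within each MUB (Step 2). The hard direction is to upgrade ``$\mathcal{O}^{(\alpha)}$ is circulant'' to ``$\mathcal{O}^{(\alpha)}=\mathbb{I}_d$'' using the hypothesis that $\mathcal{O}^{(\alpha)}\in SO(d)$ fixes $\mathbf{n}=(1,\ldots,1)$ in the parameterization of \cite{MUBs}. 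The eigenvalues of a real orthogonal circulant are unit complex numbers coming in conjugate pairs, and $\mathcal{O}^{(\alpha)}\mathbf{n}=\mathbf{n}$ pins the zeroth eigenvalue to $1$; the remaining constraint from the rotation structure of \cite{MUBs} must then be invoked to force $\lambda_k=1$ for all $k$. This is where I expect the main technical obstacle to lie: reconciling the algebraic circulant condition from Step 3 with the geometric rotation parameterization of $\mathcal{O}^{(\alpha)}$ to eliminate every non-trivial circulant rotation.
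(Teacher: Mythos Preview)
Your route and the paper's are two sides of the same computation. The paper invokes the eigenvector characterisation of Weyl-covariance established earlier (a Weyl map is exactly one for which every $W_{kl}$, equivalently every $U_\beta^m=\sum_l\omega^{lm}P_l^{(\beta)}$, is an eigenvector), then computes
\[
\Phi[U_\beta^m]=-\frac{1}{d-1}\sum_{k,l}\mathcal{O}_{kl}^{(\beta)}\omega^{lm}P_k^{(\beta)},
\]
and asks when this is proportional to $U_\beta^m$. That amounts to requiring each Fourier vector $(\omega^{lm})_l$ to be an eigenvector of $\mathcal{O}^{(\beta)}$, i.e.\ $\mathcal{O}^{(\beta)}$ is circulant --- precisely the condition you extract in Step~3 by writing out $W_{kl}\Psi[X]W_{kl}^\dagger=\Psi[W_{kl}XW_{kl}^\dagger]$ and Fourier-transforming. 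So the substantive content of Steps~1--3 matches the paper; the eigenvector formulation merely lets the paper skip your explicit decoupling argument across different MUBs.

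The gap you flag in Step~4 is genuine, and the paper does \emph{not} close it either: after the displayed computation it simply asserts ``$U_\alpha^k$ are the eigenvectors of $\Phi$ if and only if $\mathcal{O}^{(\alpha)}=\mathbb{I}_d$'', which is exactly the circulant-to-identity leap you isolate. Your suspicion is well founded: for odd prime $d$ the cyclic shift $C=\sum_m|m{+}1\>\<m|$ is a real orthogonal circulant with $\det C=1$ and $C\mathbf{n}=\mathbf{n}$, and by your Step~2 the map obtained by setting $\mathcal{O}^{(\alpha)}=C$ is Weyl-covariant. Hence, read as ``$\mathcal{O}^{(\alpha)}\in SO(d)$ with $\mathcal{O}^{(\alpha)}\mathbf{n}=\mathbf{n}$'', the proposition has counterexamples; what is actually proved (by you and by the paper) is the equivalence with ``each $\mathcal{O}^{(\alpha)}$ is circulant''. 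Pinning down the stronger conclusion would require an extra hypothesis from \cite{MUBs} (e.g.\ a specific one-parameter rotation family rather than the full stabiliser of $\mathbf{n}$), which neither proof makes explicit.
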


\begin{proof}
One can construct unitary operators from MUBs in the following way,
\begin{equation}
U_\alpha=\sum_{l=1}^d\omega^lP_l^{(\alpha)}.
\end{equation}
Recall that, for prime $d$, $U_\alpha^k$ are the rescalled Weyl operators \cite{DCKS}. {From the properties of the Weyl channels,} it follows that $\Phi$ is Weyl-covariant if and only if $\mathbb{I}$ and $U_\alpha^k$ are its eigenvectors. We check that
\begin{equation}\label{ev}
\begin{split}
\Phi[U_\beta^m]
=-\frac{1}{d-1}\sum_{\alpha=1}^{d+1}\sum_{k,l,n=1}^d\mathcal{O}_{kl}^{(\alpha)}\omega^{nm}P_k^{(\alpha)}
\left(\delta_{nl}\delta_{\alpha\beta}+\frac 1d (1-\delta_{\alpha\beta})\right)
=-\frac{1}{d-1}\sum_{k,l=1}^d\mathcal{O}_{kl}^{(\beta)}\omega^{lm}P_k^{(\beta)},
\end{split}
\end{equation}
and hence $U_\alpha^k$ are the eigenvectors of $\Phi$ if and only if
\begin{equation}\label{O}
\mathcal{O}^{(\alpha)}=\mathbb{I}_d.
\end{equation}
\end{proof}

\begin{Remark}
If $\mathcal{O}^{(\alpha)} = \mathbb{I}_d$ for $\alpha=1,\ldots,d+1$, then map (\ref{Phi-MUB}) reduces to
\begin{equation}\label{red}
\Phi[X]= \frac{1}{d-1}\left( 2 \mathbb{I}\, \Tr X - \sum_{\alpha=1}^{d+1} \Phi_\alpha[X] \right),
\end{equation}
where
\begin{equation}\label{}
\Phi_\alpha[X] = \sum_{k=0}^{d-1} P_k^{(\alpha)} X P_k^{(\alpha)} .
\end{equation}
Knowing that
\begin{equation}\label{}
  \sum_{\alpha=1}^{d+1} \Phi_\alpha = \oper + d\Phi_0 ,
\end{equation}
with $\Phi_0[X]= \frac{1}{d}\mathbb{I}\Tr X$ being the completely depolarizing channel, we recover the reduction map in $M_d(\mathbb{C})$,
\begin{equation}\label{}
\Phi[X]= \frac{1}{d-1}\left( \mathbb{I}\, \Tr X - X \right).
\end{equation}
\end{Remark}

Now, let us generalize the map in (\ref{red}) as follows.

\begin{Proposition}
Take a subset $\Gamma \subset \{1,2,\ldots,d+1\}$ with $|\Gamma|=k$. The map defined by
\begin{equation}\label{gen}
\Phi_\Gamma := \frac{1}{d-1}\left( 2(k-1) \Phi_0 + \sum_{\alpha \notin \Gamma} \Phi_\alpha - \sum_{\alpha \in \Gamma} \Phi_\alpha \right)
\end{equation}
is the generalized Pauli trace-preserving positive map.
\end{Proposition}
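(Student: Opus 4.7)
The plan is to verify trace preservation, the generalized Pauli structure, and positivity in turn.

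Trace preservation is immediate from $\Tr\Phi_0[X]=\Tr\Phi_\alpha[X]=\Tr X$:
$$\Tr\Phi_\Gamma[X] \;=\; \tfrac{1}{d-1}\bigl[2(k-1)+(d+1-k)-k\bigr]\Tr X \;=\; \Tr X.$$

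For the generalized Pauli structure, for prime $d$ each MUB $\alpha$ is associated with a line $L_\alpha\subset\mathbb{F}_d^2$ through the origin, and $\Phi_\alpha[X]=\frac{1}{d}\sum_{(k,l)\in L_\alpha}W_{kl}XW_{kl}^\dagger$, while $\Phi_0[X]=\frac{1}{d^2}\sum_{(k,l)}W_{kl}XW_{kl}^\dagger$. Substituting into (\ref{gen}) exhibits $\Phi_\Gamma$ as a Weyl map whose Kraus coefficients are constant along each $L_\alpha$; since $(k,l)\mapsto(\alpha k,\alpha l)$ permutes lines through the origin, this is precisely the invariance $\ell_{\alpha k,\alpha l}=\ell_{kl}$ characterizing the generalized Pauli class (\ref{GPC_form})--(\ref{GPC_group}).

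For positivity, the Remark's identity $\sum_{\alpha=1}^{d+1}\Phi_\alpha=\mathrm{id}+d\Phi_0$ lets one rewrite
$$(d-1)\Phi_\Gamma[X] \;=\; X+\tfrac{d+2k-2}{d}\mathbb{I}\Tr X-2\sum_{\alpha\in\Gamma}\Phi_\alpha[X],$$
so positivity is equivalent to the operator inequality $X+\tfrac{d+2k-2}{d}\mathbb{I}\Tr X\geq 2\sum_{\alpha\in\Gamma}\Phi_\alpha[X]$ for all $X\geq 0$. By convexity it suffices to take $X=|\phi\rangle\langle\phi|$ pure, and evaluating in an arbitrary unit vector $|\psi\rangle$ turns the claim into
$$\tfrac{d+2k-2}{d}+|\langle\psi|\phi\rangle|^2 \;\geq\; 2\sum_{\alpha\in\Gamma}P_\alpha(\phi,\psi), \qquad P_\alpha(\phi,\psi):=\sum_j|\langle\psi_j^{(\alpha)}|\phi\rangle|^2|\langle\psi|\psi_j^{(\alpha)}\rangle|^2.$$

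The main technical obstacle is the above MUB inequality. For $k=1$ it reduces to the single-MUB bound $2P_\alpha\leq 1+|\langle\psi|\phi\rangle|^2$, which I would prove in the $\alpha$-basis via Lagrange's identity combined with the AM--GM estimate $|\Real(\bar\psi_j\psi_k\phi_k\bar\phi_j)|\leq|\psi_j||\psi_k||\phi_j||\phi_k|$; these reduce the claim to $\sum_{j<k}(|\psi_j||\phi_k|-|\psi_k||\phi_j|)^2\geq 0$. For $k\geq 2$, naively summing $k$ copies yields a constant $k$ on $\mathbb{I}\Tr X$ instead of the sharp $\tfrac{d+2k-2}{d}<k$, so the slack must be recovered from mutual unbiasedness. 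A natural route combines the MUB sum rule $\sum_{\alpha=1}^{d+1}P_\alpha=1+|\langle\psi|\phi\rangle|^2$ (equivalent to $\sum_\alpha J(\Phi_\alpha)=J(\mathrm{id}+d\Phi_0)$) with the partial Bohr estimate $\sum_{\alpha\in\Gamma}\sum_j(p_j^{(\alpha)})^2\leq 1+\tfrac{k-1}{d}$ (derived from $\sum_{\alpha=1}^{d+1}\sum_j(p_j^{(\alpha)})^2=1+\Tr\rho^2$ together with the pointwise $\sum_j(p_j^{(\alpha)})^2\geq 1/d$), which via Cauchy--Schwarz yields $\sum_{\alpha\in\Gamma}P_\alpha\leq 1+\tfrac{k-1}{d}$. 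Interpolating these two constraints against the pointwise cap $P_\alpha\leq(1+|\langle\psi|\phi\rangle|^2)/2$ should pin down the sharp constant $\tfrac{d+2k-2}{d}$; the endpoint cases $k=1$ and $k=d+1$ (the reduction map of the Remark) serve as sanity checks.
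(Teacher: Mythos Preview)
Your treatment of trace preservation and of the generalized Pauli structure is correct (the paper does not even spell out the latter). The positivity argument, however, is complete only for $k=1$. Rewriting the target inequality via the sum rule, what you need for general $k$ is
\[
\sum_{\alpha\in\Gamma}P_\alpha(\phi,\psi)\ \leq\ \frac{1+t}{2}+\frac{k-1}{d},\qquad t:=|\langle\psi|\phi\rangle|^2 .
\]
Your two proposed ingredients are the pointwise cap $P_\alpha\leq\tfrac{1+t}{2}$ and the Cauchy--Schwarz/Bohr bound $\sum_{\alpha\in\Gamma}P_\alpha\leq 1+\tfrac{k-1}{d}$. But for $t<1$ the first overshoots the target by $(k-1)\bigl(\tfrac{1+t}{2}-\tfrac{1}{d}\bigr)>0$ and the second overshoots by $\tfrac{1-t}{2}>0$; any convex combination of the two still overshoots. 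So ``interpolating these two constraints'' cannot pin down the sharp constant, and the $k\geq 2$ case is genuinely left open in your write-up.

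The paper sidesteps this operator inequality entirely. Instead of bounding $\langle\psi|\Phi_\Gamma[P]|\psi\rangle$ from below, it shows that for \emph{every} rank-$1$ projector $P$ one has $\Tr\bigl(\Phi_\Gamma[P]\bigr)^2=\tfrac{1}{d-1}$, and then invokes a positivity criterion from the reference on MUB-based maps. The purity computation needs only two facts you already have: the $2$-design identity $\sum_{\alpha=1}^{d+1}\Tr\bigl(\Phi_\alpha[P]\bigr)^2=\sum_{\alpha,j}\Tr\bigl(PP_j^{(\alpha)}\bigr)^2=2$, and $\Tr\bigl(\Phi_\alpha[P]\Phi_\beta[P]\bigr)=\Tr\bigl(P\,\Phi_\alpha\Phi_\beta[P]\bigr)=\Tr\bigl(P\,\Phi_0[P]\bigr)=\tfrac{1}{d}$ for $\alpha\neq\beta$ (self-duality plus $\Phi_\alpha\Phi_\beta=\Phi_0$). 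Expanding the square of $2(k-1)\Phi_0[P]+\sum_{\alpha\notin\Gamma}\Phi_\alpha[P]-\sum_{\alpha\in\Gamma}\Phi_\alpha[P]$ and collecting terms then gives $d(d-1)$ uniformly in $k$, with no inequality required.
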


\begin{proof}
Let $P$ be any rank-1 projector. We show that
\begin{equation}\label{}
\Tr(\Phi_\Gamma[P])^2 = \frac{1}{d-1},
\end{equation}
which implies the positivity of $\Phi_\Gamma$ (cf. \cite{MUBs}). We consider two separate cases:

\begin{enumerate}

\item $k=1$,

\item $k=2,\ldots,d+1$.

\end{enumerate}

If $k=1$, then $\Gamma=\{ \alpha_*\}$, and $\Phi_\Gamma$ simplifies to
\begin{equation}\label{gen-1}
  \Phi_\Gamma := \frac{1}{d-1}\left(  {\sum_{\alpha}}^\prime \Phi_\alpha -  \Phi_{\alpha_*} \right) ,
\end{equation}
where ${\sum}^\prime$ denotes the sum over $\alpha \neq \alpha_*$. We need to find
\begin{eqnarray}
\Tr(\Phi_\Gamma[P])^2 = \frac{1}{(d-1)^2} \left\{
\sum_{\alpha=1}^{d+1} \Tr(\Phi_\alpha[P])^2 +{\sum_{\alpha \neq \beta}}^\prime  \Tr(\Phi_\alpha[P] \Phi_\beta[P] )  - 2 {\sum_{\alpha}}^\prime \Tr( \Phi_\alpha[P] \Phi_{\alpha_*}[P]) \right\}.
\end{eqnarray}
Observe that
\begin{equation}\label{}
  \Tr( \Phi_\alpha[P] \Phi_\alpha[P] ) =  \sum_{k=0}^{d-1}  \sum_{l=0}^{d-1} \Tr\left( P_k^{(\alpha)} P P_k^{(\alpha)}   P_l^{(\alpha)} P P_l^{(\alpha)} \right) =  \sum_{k=0}^{d-1} \Tr(  P P_k^{(\alpha)} )^2,
\end{equation}
and hence
\begin{equation}\label{a1}
\sum_{\alpha=1}^{d+1} \Tr(\Phi_\alpha[P])^2 = \sum_{\alpha=1}^{d+1} \sum_{k=0}^{d-1}  \Tr(  P P_k^{(\alpha)} )^2  = 2 ,
\end{equation}
where we have used the well-known property of the MUBs \cite{Wu,Beatrix}. Moreover, using the fact that all the maps $\Phi_\alpha$ are self-dual and $\Phi_\alpha \Phi_\beta = \Phi_0$ for $\alpha \neq \beta$, one finds
\begin{equation}\label{a2}
\Tr(\Phi_\alpha[P] \Phi_\beta[P] )  =  \Tr(P \Phi_\alpha[\Phi_\beta[P]] ) = \Tr(P \Phi_0[P] ) = \frac 1d .
\end{equation}
Finally, we arrive at
\begin{eqnarray}
\Tr(\Phi_\Gamma[P])^2 = \frac{1}{(d-1)^2} \left[ 2 + (d-1) - 2 \right] = \frac{1}{d-1} .
\end{eqnarray}

For $k>1$, {we arrive at}
\begin{eqnarray}
  \Tr(\Phi_\Gamma[P])^2 &=& \frac{1}{(d-1)^2} \Tr \left\{ 4(k-1)^2 (\Phi_0[P])^2 + 
   4(k-1) \Phi_0[P] \left( \sum_{\alpha \notin \Gamma} \Phi_\alpha[X] - \sum_{\alpha \in \Gamma} \Phi_\alpha[X] \right)    \right.  \\
   &+&  \left.  \sum_{\alpha \notin \Gamma} \Phi_\alpha[X] \sum_{\beta \notin \Gamma} \Phi_\beta[X] +  \sum_{\alpha \in \Gamma} \Phi_\alpha[X] \sum_{\beta \in \Gamma} \Phi_\beta[X]      - \sum_{\alpha \notin \Gamma} \Phi_\alpha[X]  \sum_{\beta \in \Gamma} \Phi_\beta[X]    -    \sum_{\beta \in \Gamma} \Phi_\beta[X]  \sum_{\alpha \notin \Gamma} \Phi_\alpha[X] \right\} , \nonumber
\end{eqnarray}
which simplifies to
\begin{equation}
\begin{split}
\Tr(\Phi_\Gamma[P])^2\ =&\ \ \frac{1}{(d-1)^2} \left\{ \frac{4(k-1)^2}{d}  + 
   \frac{4(k-1)}{d} \left( d+1-2k \right)   +  \sum_{\alpha=1}^{d+1} \Tr(\Phi_\alpha[P])^2  \right.  \\
+&  \left.  \sum_{\alpha\neq \beta \notin \Gamma} \Tr(\Phi_\alpha[P]  \Phi_\beta[P] )+  \sum_{\alpha \neq \beta \in \Gamma} \Tr(\Phi_\alpha[P]  \Phi_\beta[P] )     - 2\sum_{\alpha \notin \Gamma} \sum_{\beta \in \Gamma}  \Tr(\Phi_\alpha[P] \Phi_\beta[P] )   \right\} ,
\end{split}
\end{equation}
Finally, using eqs. (\ref{a1}-\ref{a2}), one gets
\begin{equation}
\begin{split}
  \Tr(\Phi_\Gamma[P])^2\ &=\ \ \frac{1}{d(d-1)^2} \left\{ {4(k-1)^2}  + 
   {4(k-1)} \left( d+1-2k \right)   +  2d  +2 \left( \begin{array}{c}  d+1-k \\ 2 \end{array} \right) \right.  \\
   &+  \left.    2 \left( \begin{array}{c} k \\ 2 \end{array} \right)     - 2 k (d+1-k)   \right\} =   \frac{1}{d-1} ,
\end{split}
\end{equation}
which ends the proof.
\end{proof}

%
%
%
%
%
%
%
%
%
%
%
%
%

\section{Conclusions}

{
We analyzed a class of irreducibly covariant quantum channels with respect to the unitary representations of the finite group $G$ generated by the Weyl operators. Interestingly, all the $d$-dimensional representations of $G$ are multiplicity-free, and hence one can apply to them the analysis developed recently by Mozrzymas et. al. \cite{MSD}. The channel that is irreducibly covariant with respect to the $d$-dimensional representations of $G$ turns out to be the Weyl channel. For prime dimensions, one can consider the subclass of the Weyl channels known as the generalized Pauli channels \cite{Ruskai,DCKS,DCKS2}. Finally, we analyzed the class of irreducibly covariant maps which are positive but not necessarily completely positive. Such maps provide an important tool in the entanglement theory. We presented two classes of positive maps: one belonging to the general family of the Weyl maps and the other to the reduced family of the generalized Pauli maps. It would be interesting to further study the properties of positive maps -- for example, decomposability, optimality, or extremality. Also, the generalized Pauli maps were defined only for prime dimensions. One could consider a more general case with $d=p^r$ and prime $p$, where the maximal set of `$d+1$' MUBs can be found \cite{Wootters}.}

\section*{Acknowledgements} This paper was partially supported by the National Science Centre project 2015/19/B/ST1/03095.

\end{document}